\Crefname{equation}{Eq.}{Eqs.}
\Crefname{figure}{Fig.}{Figs.}
\Crefname{tabular}{Tab.}{Tabs.}
\theoremstyle{plain}
\newtheorem{thm}{Theorem}[section] 
\newtheorem{lem}[thm]{Lemma}
\newtheorem{prop}[thm]{Proposition}
\newtheorem*{defi*}{Definition}
\newtheorem*{thm*}{Theorem}
\theoremstyle{definition}
\newcommand{\nn}{\mathbb{N}}
\newcommand{\zz}{\mathbb{Z}}
\newcommand{\rr}{\mathbb{R}}
\newcommand{\cc}{\mathbb{C}}
\newcommand{\Abb}{\mathbb{A}}
\newcommand{\dd}{\mathrm{\normalfont d}}
\newcommand{\GL}{\mathrm{GL}}
\newcommand{\Oo}{\mathrm{O}}
\newcommand{\Pp}{\mathrm{P}}
\newcommand{\PO}{\mathrm{PO}}
\newcommand{\Ad}{\mathrm{Ad}}
\newcommand{\ad}{\mathrm{ad}}
\newcommand{\SL}{\mathrm{SL}}
\newcommand{\SU}{\mathrm{SU}}
\newcommand{\Uu}{\mathrm{U}}
\newcommand{\pfrak}{\mathfrak{p}}
\newcommand{\sfrak}{\mathfrak{s}}
\newcommand{\ufrak}{\mathfrak{u}}
\newcommand{\ofrak}{\mathfrak{o}}
\newcommand{\pgl}{\mathfrak{pgl}}
\newcommand{\yy}{\mathbf{y}}
\newcommand{\CP}{\mathbb{CP}}
\newcommand{\RP}{\mathbb{RP}}
\newcommand{\PGL}{\mathrm{PGL}}
\newcommand{\Xx}{\mathsf{X}}
\newcommand{\Gg}{\mathsf{G}}
\newcommand{\GSM}{G_\mathrm{SM}}
\newcommand{\Hcal}{\mathcal{H}}
\newcommand{\Ocal}{\mathcal{O}}
\newcommand{\Pcal}{\mathcal{P}}
\newcommand{\Afrak}{\mathfrak{A}}
\newcommand{\Extalt}{\mathchoice{{\textstyle\bigwedge}}%
    {{\bigwedge}}%
    {{\textstyle\wedge}}%
    {{\scriptstyle\wedge}}}
\newcommand{\glfrak}{\mathfrak{gl}}
\newcommand{\Ss}{\mathrm{S}}
\newcommand{\totimes}{\,\tilde{\otimes}\,}
\tikzset{
    labl/.style={anchor=north, rotate=90, inner sep=1.mm}
}
\begin{document}
\title{Similarities between projective quantum fields and the Standard Model}
\author{Daniel Spitz}
\email{daniel.spitz@mis.mpg.de}
\affiliation{Institut f\"ur theoretische Physik, Ruprecht-Karls-Universit\"at Heidelberg, \mbox{Philosophenweg 16}, 69120 Heidelberg, Germany}
\affiliation{Max-Planck-Institut f\"ur Mathematik in den Naturwissenschaften, \mbox{Inselstraße 22, 04103 Leipzig}, Germany}


\begin{abstract}
Many homogeneous, four-dimensional space-time geometries can be considered within real projective geometry, which yields a mathematically well-defined framework for their deformations and limits without the appearance of singularities. 
Focussing on generalized unitary transformation behavior, projective quantum fields can be axiomatically introduced, which transform smoothly under geometry deformations and limits.
Connections on the related projective frame bundles provide gauge fields with gauge group $\PGL_5\rr$.
For Poincaré geometry, on operator level only $\Pp(\GL_2\rr\times \GL_3\rr)\cong \rr_{\neq 0}\times \PGL_2\rr\times \PGL_3\rr$ gauge bosons can interact non-trivially with other projective quantum fields from the non- to ultra-relativistic limits.
The corresponding propagating, complexified gauge bosons come with the Standard Model gauge group $G_{\mathrm{SM}}=(\Uu(1)\times \SU(2)\times \SU(3))/\zz_6$.
Physical scale transformations can act as global gauge transformations and their spontaneous breaking can lead to masses for the projective quantum fields including the $\SU(2)$ gauge bosons.
Projective quantum fields, which are irreducible both with respect to the Lie algebra $\pgl_5\rr$ and the Poincaré group, form Dirac fermions and $\GSM$ gauge bosons interact with them similar to the Standard Model.
For homogeneous, curved Lorentz geometries a gauge group similar to the gauge group of metric-affine gravity appears.
\end{abstract}

\maketitle


\section{Introduction}
The Standard Model of particle physics (SM)~\cite{Donoghue:1992dd, Schwartz:2014sze,Langacker:2017uah} has seen a tremendous amount of experimental validation. 
It incorporates a gauge symmetry with gauge group $\GSM = (\Uu(1)\times \SU(2)\times \SU(3))/\zz_6$, the cyclic $\zz_6$ subgroup generated by $(\exp(\pi i /3),-1_{2\times 2},\exp(2\pi i /3)\cdot 1_{3\times 3})$~\cite{Baez:2009dj}.
Three generations of Dirac fermions tensored with particular $\GSM$ representations interact with the gauge bosons.
Oscillations among the generations occur, and the fermions and $W^\pm, Z$ gauge bosons acquire effective masses via the Higgs mechanism.

A unified framework to understand this peculiar structure, at least in parts and optimally based on widely accepted physical concepts, is lacking.
Concerning the gauge symmetry, the exterior algebra $\Extalt\cc^5$ provided an early, remarkably simple finding. 
With the non-trivial identification $\Ss(\Uu(2)\times\Uu(3))\cong \GSM$, it yields the gauge group representation of the SM fermions augmented by two trivial representations~\cite{Georgi:1974sy,Baez:2009dj}.
This stimulated far-reaching research on grand unified theories extending the SM gauge group~\cite{Langacker:1980js}, but evidence for the related proton decays remains absent~\cite{Super-Kamiokande:2014otb}.
Not aiming for grand unification, a number of mathematical constructions can give rise to $\GSM$ gauge theory and further SM structures.
They are based on e.g.~non-commutative geometry~\cite{Connes:1990qp,Chamseddine:1996zu}, the octonions~\cite{Dubois-Violette:2016kzx, Furey:2018drh, Todorov:2018mwd, Krasnov:2019auj} or Twistor theory~\cite{Woit:2021bmb}.
This paper describes similarities between so-called projective quantum fields and the SM, based on projective models of space-time geometries, generalized unitarity and irreducibility assumptions.
Moreover, indications for the presence of gravity appear.

The central mathematical observation for this is that many homogeneous space-time geometries can be considered within real projective geometry, which is similar to Weyl geometry, but not necessarily metric-based.
Then, space-times and their symmetry groups are defined only up to non-zero prefactors, which allows for the description of space-time geometry limits via limits of projective matrix products~\cite{cooper2018limits}.
Examples for such limits are the flattening of de Sitter geometry, which is curved, or the non-relativistic (Galilei) limit of Poincaré geometry.
Quantum field theories (QFTs) can be formulated in this setting via generalized unitary transformation behavior under space-time symmetry transformations and projectivization of the field operators.
Such projective quantum fields have recently been introduced by the author~\cite{Spitz:2024similarities} and shown to transform smoothly under (homogeneous) geometry deformations and limits, similarly for their correlation functions.
Geometry limits can reduce the space-time dimension for projective quantum fields.

Gauge fields with gauge group $\PGL_5\rr$ appear in this context as connections of the projective variant of the frame bundle and can implement the invariance under local changes of projective reference frames.
This is similar to connections in metric-affine gravity~\cite{Hehl:1994ue} and generalizes gauged affine symmetries in the projective setting, including gauged dilatations.

The generalized unitary transformation behavior of projective quantum fields links $\PGL_5\rr$ gauge transformations to space-time symmetry transformations, but preserving locality.
This way, the presence of gauge interactions in the non- and ultra-relativistic limits of Poincaré geometry requires that on operator level the gauge group is basically reduced to $\Pp(\GL_2\rr\times \GL_3\rr)\cong \rr_{\neq 0}\times \PGL_2\rr \times \PGL_3\rr$, as we prove in this work.
For homogeneous, curved Lorentz geometries, the reduced gauge group $\Pp(\rr_{\neq 0}\times \GL_4\rr)\cong \GL_4\rr$ together with an Abelian subgroup of $\PGL_5\rr$ isomorphic to $\rr^4$ appears, if gauge interactions, which survive the flattening process to Poincaré geometry, are considered.

Physical scale transformations can act as global gauge transformations on projective quantum fields restricted to Poincaré geometry.
As for the Stueckelberg mechanism~\cite{Stueckelberg:1938zz, Ghilencea:2019jux}, this can lead to masses for those fields, which are not invariant under such transformations, if the corresponding global gauge symmetry is spontaneously broken.
In particular, this applies to the $\PGL_2\rr$ gauge bosons.
Masses for other projective quantum fields can be effectively generated as in the SM.

It has been shown in~\cite{Spitz:2024similarities} that fermionic projective quantum fields on Poincaré geometry, which transform irreducibly with respect to the Lie algebra $\pgl_5\rr$ of $\PGL_5\rr$ and the Poincaré group, come with uneven exterior powers of the fundamental representation $\CP^4_{\pgl_5\rr}$ of $\pgl_5\rr$ and their dual representations.
Taking their complexifications into account, Poincaré transformations act on them as for Dirac fermions and gauge transformations in the reduced gauge group act similar to how $\GSM$ acts on one generation of the SM fermions.
The SM gauge group $\Ss(\Uu(2)\times \Uu(3))\cong\GSM$ appears, if the propagating, complexified gauge bosons are considered, i.e., the gauge bosons for the compact part of the complexified Lie algebra of $\Pp(\GL_2\rr\times\GL_3\rr)$.
The involved gauge algebra map can be physically motivated based on representation equivalence and is similar to how spinor representations appear for QFTs on Poincaré geometry.
Unlike the SM, in our model gauge and Poincaré transformations act not via tensor product representations on the fermions.

This manuscript is organized as follows.
In \Cref{SecProjQuantumFields} we review projective quantum fields, which incorporate the generalized unitary transformation behavior.
\Cref{SecPGL5RGaugeTheory} briefly describes how $\PGL_5\rr$ gauge fields can appear in this context.
In \Cref{SecGaugeGroupReduction} we prove the gauge group reduction theorem central to this work.
\Cref{SecScaleTrafos} considers the action of physical scale transformations as a 1-parameter gauge subgroup.
Finally, in \Cref{SecSimilaritiesSM} we discuss how this leads to similarities with the SM.
We conclude in \Cref{SecOutlook} with an outlook, and a few appendices provide proofs for statements in the main text.

\section{Quantum fields on projective geometries}\label{SecProjQuantumFields}
This work deals with quantum fields on homogeneous space-time geometries $(\Xx,\Oo)$, which consist of a space-time $\Xx$ (the model space) and a Lie group $\Oo$ acting transitively on $\Xx$ (the structure group).
The mathematical theory of such Klein geometries is described e.g.~in~\cite{cooper2018limits}, which we summarize here.
We focus on geometries, for which $\Xx$ is a subspace of the four-dimensional real projective space $\RP^4$ and $\Oo$ is a subgroup of $\PGL_5\rr$, denoted $(\Xx,\Oo)<(\RP^4,\PGL_5\rr)$.
The projective setting with its equivalence up to coordinate rescalings, which is similar to Weyl geometry with its invariance under metric rescalings, allows us later to consistently describe geometry deformations and limits without the appearance of singularities.
For introductions to projective geometry we refer to~\cite{ovsienko2004projective,Gallier2011}.
Important examples of such geometries have the structure groups
\begin{align}\label{EqOMatrices}
&\PO((p_0,q_0),\dots,(p_k,q_k))\nonumber\\
&\qquad:=\Pp\left(\begin{matrix}
\Oo(p_0,q_0) & 0 & \cdots & 0 \\
\rr^{m_1\times m_0} & \Oo(p_1,q_1) &  & 0\\
\vdots & \vdots & \ddots & \vdots \\
\rr^{m_k\times m_0} & \rr^{m_k\times m_1} & \cdots & \Oo(p_k,q_k)
\end{matrix}\right)\,,
\end{align}
where $m_i:=p_i+q_i$, $\sum_i m_i = 5$ and which act transitively on
\begin{align}
&\Xx((p_0,q_0),\dots,(p_k,q_k))\nonumber\\
& := \{[x_0,\dots,x_{m-1}]\in \RP^4\,|\, -x_0^2-\ldots-x_{p_0-1}^2\nonumber\\
&\qquad\qquad\qquad\qquad +x_{p_0}^2+\ldots+x_{m_0-1}^2 < 0\}\,.
\end{align}
We write 
\begin{align}
&\Gg((p_0,q_0),\dots,(p_k,q_k))\nonumber\\
&:= (\Xx((p_0,q_0),\dots,(p_k,q_k)),\Pp\Oo((p_0,q_0),\dots, (p_k,q_k)))\,.
\end{align}
For instance, $\Gg(4,1)$ is the projective model of four-dimensional de Sitter geometry and $\Gg(3,2)$ is the projective model of four-dimensional anti-de Sitter geometry, while $\Gg((1),(3,1))$ with
\begin{equation}\label{EqPoincareModelSpaceA31}
\Xx((1),(3,1)) = \{[x_0,\dots,x_4]\,|\, x_0\neq 0\} = \Abb^{3,1}
\end{equation}
is the projective model of Poincaré geometry.
This can be identified with Poincaré geometry itself, the Poincaré group isomorphically represented by projective $5\times 5$ matrices.
In \Cref{EqPoincareModelSpaceA31}, $\Abb^{3,1}$ denotes the four-dimensional affine space endowed with the Minkowski metric tensor $(\eta_{\mu\nu}) = \mathrm{diag}(-1,-1,-1,+1)$ and the identification with $\Xx((1),(3,1))$ is via 
\begin{equation}\label{EqDiffeoPsi}
\xi:[x_0,\dots,x_4] \mapsto (x_1/x_0,\dots,x_4/x_0)\,.
\end{equation}
Note that for notational consistency with regard to the metric signatures, the last component on the right-hand side of \eqref{EqDiffeoPsi} denotes the time coordinate.
When referring to specific space-time geometries such as Poincaré geometry, in this work mostly their projective models are meant.

A space-time geometry $\Gg=(\Xx,\Oo)<(\RP^4,\PGL_5\rr)$ can be deformed by $[g]\in \PGL_5\rr$ acting on the model space by projective matrix multiplication, $\Xx\ni [x]\mapsto [g\cdot x]$, and conjugating the elements of $\Oo$, $\Oo\ni [h]\mapsto \Ad_{[g]}[h] := [g\cdot h \cdot g^{-1}]$.
The pair $[g]_*\Gg:=([g]\cdot \Xx,\Ad_{[g]}\Oo)$ is again a geometry.

Replacing $[g]$ by a sequence $[b_n]\in \PGL_5\rr$, the $n\to\infty$ limit of $[b_n]_*\Gg$ can be defined as follows~\cite{cooper2018limits}.
A Lie group $\Oo'$ is called the conjugacy limit of $\Oo$ via $[b_n]$, if every $[h']\in \Oo'$ is the limit of some sequence $[h_n]\in \Ad_{[b_n]}\Oo$ and if every accumulation point of every sequence $[h_n]\in \Ad_{[b_n]}\Oo$ lies in $\Oo'$.
This way, the Lie algebra $\ofrak'$ of $\Oo'$ is given by $\ofrak'=\lim_{n\to\infty} \Ad_{[b_n]}\ofrak$~\cite{trettel2019families}, where $\ofrak$ is the Lie algebra of $\Oo$ and convergence is defined as for conjugacy limits.%
\footnote{Actually, it has been shown in~\cite{Spitz:2024similarities} that conjugacy limits act on Lie algebra level isomorphically to compositions of Lie algebra contractions.} 
Finally, the sequence of geometries $[b_n]_*\Gg < (\RP^4,\PGL_5\rr)$ converges to the geometry $\Gg'=(\Xx',\Oo')$, if $\Oo'$ is the conjugacy limit of $\Oo$ via $[b_n]$ and there exists $[z]\in \Xx'\subset \RP^4$, such that for all $n$ sufficiently large $[z]\in [b_n]\cdot \Xx$. 
In fact, all such limits of $\Gg((p_0,q_0),\dots,(p_k,q_k))< (\RP^4,\PGL_5\rr)$ have been classified, see~\cite{cooper2018limits}.

As an example, consider Poincaré geometry $\Gg((1),(3,1))$ and a Lorentz-boost in the 3-direction.
The sequence $[b_n]=\Pp(\mathrm{diag}(\exp(-n),1,1,1,\exp(-n)))$ acts on the lower-right $2\times 2$ submatrix of the boost generator as
\begin{align}\label{EqBoostLimit}
\Pp\left(\begin{matrix}
0 & i \\
i & 0
\end{matrix}\right)&\mapsto \Pp\left(\begin{matrix}
0 & i e^n\\
i e^{-n} & 0
\end{matrix}\right) \nonumber\\
&\qquad\quad= \Pp\left(\begin{matrix}
0 & i \\
i e^{-2n} & 0
\end{matrix}\right) \to \Pp\left(\begin{matrix}
0 & i\\
0 & 0
\end{matrix}\right)
\end{align}
for $n\to\infty$.
In the projective setting, the multiplication by $\exp(-n)$ is an identity map and provides the basis for well-defined limit matrices.
Due to \eqref{EqBoostLimit}, the limit process turns Lorentz boosts into Galilean velocity additions, corresponding to the infinite speed of light limit.
The conjugacy limit of $\PO((1),(3,1))$ via $[b_n]$ is isomorphic to the projective Galilei group:
\begin{equation}
\lim_{n\to\infty} \Ad_{[b_n]} \PO((1),(3,1)) = \Ad_{\tau}\PO((1),(1),(3))\,,
\end{equation}
where $\tau = (0)\,(1\, 2\, 3\, 4)$ permutes coordinates.
The model space remains unaltered:
\begin{equation}
[b_n]\cdot \Xx((1),(3,1)) = \Xx((1),(3,1)) = \Xx((1),(1),(3))\,, 
\end{equation}
such that the limit geometry is the projective model of Galilei geometry:
\begin{equation}
[b_n]_*\Gg((1),(3,1))\to \tau_* \Gg((1),(1),(3))\,.
\end{equation}
For further examples of geometry limits we refer to~\cite{Spitz:2024similarities}.

Quantum fields depend on the geometry of space-time through projective unitary representations of the structure group, which act on the field operators~\cite{Weinberg:1995mt}.
In order to make use of the described framework for geometry limits, the author axiomatically introduced projective quantum fields in~\cite{Spitz:2024similarities}.
For the purposes of the present work, only their representation-theoretic behavior is of interest.%
\footnote{Accordingly, here we do not consider the boundedness of field operators upon smearing with test functions and the restriction of their domain to certain dense subspaces of the Hilbert space.} 
Let $\overline{\PGL_5\rr}$ denote the (universal) double cover of $\PGL_5\rr$.
Then, a projective quantum field is a tuple $\hat{\Ocal}=(U,\rho,\{[\hat{\Ocal}([x])]\,|\, [x]\in\RP^4\})$ consisting of a projective unitary $\PGL_5\rr$ representation $U$ on a Hilbert space $\Hcal$, a finite-dimensional, complex $\overline{\PGL_5\rr}$ representation $\rho$ and an equivalence class of operators $[\hat{\Ocal}([x])]$ for every $[x]\in\RP^4$, defined modulo global $C^\infty(\RP^4,\rr_{\neq 0})$ prefactors, such that for all $[x]\in \RP^4$ and all $\alpha=1,\dots,\dim\rho$ their representatives obey generalized unitary transformation behavior for arbitrary $[g]\in\PGL_5\rr$:
\begin{equation}\label{EqGlobalCovariance}
U([g]) \hat{\Ocal}_\alpha([x])U^\dagger([g]) = \sum_{\beta=1}^{\dim\rho} \rho_{\alpha\beta}([g^{-1}]) \hat{\Ocal}_\beta([g\cdot x])\,.
\end{equation}
On a geometry $(\Xx,\Oo)<(\RP^4,\PGL_5\rr)$, the projective quantum field is given by restriction:
\begin{equation}\label{EqGeometryRestriction}
\hat{\Ocal}|_{(\Xx,\Oo)}:=(U|_{\Oo},\rho,\{[\hat{\Ocal}([x])]\,|\, [x]\in\Xx\})\,.
\end{equation}

In this work we only consider projective quantum fields such as $\hat{\Ocal}$, which do not possess internal degrees of freedom.
By means of the geometry restriction \eqref{EqGeometryRestriction}, \Cref{EqGlobalCovariance} indeed generalizes the unitary transformation behavior of quantum fields on fixed homogeneous geometries such as de Sitter, anti-de Sitter or Poincaré geometry. 
It has been shown in~\cite{Spitz:2024similarities} that such projective quantum fields transform smoothly under geometry deformations and limits without singularities appearing, but their space-time support can shift to space-time boundaries and other lower-dimensional space-time subspaces.
A corresponding QFT can therefore be dimensionally reduced through geometry limits.

We define an algebra $\Afrak_{\hat{\Ocal}}(\Xx)$ of projective correlators for the projective quantum field $\hat{\Ocal}$ as
\begin{align}\label{EqCorrelatorAlgebra}
\Afrak_{\hat{\Ocal}}(\Xx):=&\big\{ [\hat{\Ocal}(f_1,\Xx)^{(\dagger)}\otimes \ldots \otimes \hat{\Ocal}(f_{\ell},\Xx)^{(\dagger)}]\nonumber\\
&\qquad\qquad\qquad \,\big| f_i\in C^\infty(\RP^4), \ell\in\nn\big\}\,,
\end{align}
where smeared field operator representatives are given by
\begin{equation}
\hat{\Ocal}(f,\Xx):=\int_\Xx \dd^4[x]\, f([x]) \hat{\Ocal}([x])
\end{equation}
and formally encode renormalization of the projective correlators~\cite{Haag1996Local}.
The tensor products in \Cref{EqCorrelatorAlgebra} are defined only with respect to the indices corresponding to the action of $\rho$, so that the domain of the projective correlators remains the original Hilbert space.
The superscript $(\dagger)$ indicates taking the adjoint of individual field operators or not.
The author has shown in~\cite{Spitz:2024similarities} that expectation values of all projective correlators in $\Afrak_{\hat{\Ocal}}(\Xx)$ remain well-defined under geometry deformations and limits, but their space-time support can shift to space-time boundaries and other lower-dimensional space-time subspaces, analogously to projective quantum fields.
In particular, this applies to their infrared and ultraviolet limits, for which the space-time dimension reduces to 3.

\section{$\PGL_5\rr$ gauge theory}\label{SecPGL5RGaugeTheory}
In the framework of projective quantum fields, $\PGL_5\rr$ gauge fields appear as connections of the projective frame bundle, which is based on projective (reference) frames.
We first introduce projective vector fields and projective frames, which provides a summary of the corresponding considerations in~\cite{Spitz:2024similarities}, and subsequently sketch the occurrence of $\PGL_5\rr$ gauge fields.

A projective vector field on a four-dimensional model space $\Xx$ is an equivalence class $[W]$ of vector fields on $\Xx$, where $W\sim\tilde{W}$, if there exists $\lambda\in C^\infty(\Xx,\rr_{\neq 0})$: $\tilde{W} = \lambda W$ everywhere on $\Xx$.
A projective frame on $\Xx$ is defined from a non-projective frame $(e_1,\dots,e_4)$ on $\Xx$ as $([e_1],\ldots,[e_4],[e_5]=[e_1+\ldots+e_4])$.
Projective frames on $\Xx$ consist of five projective vector fields, so that the projective frame is uniquely determined up to a common $C^\infty(\Xx,\rr_{\neq 0})$ prefactor.
Indeed, given two projective frames $([e_1], \ldots,[e_5])$, $([f_1],\ldots,[f_5])$, if $[e_B]=[f_B]$ holds for all $B=1,\dots,5$, then $e_B=\lambda f_B$ for all $B$ and a single $\lambda\in C^\infty(\Xx,\rr_{\neq 0})$.
In terms of the projective frame $([e_B])$, the projective vector field $[W]$ can be non-uniquely decomposed as $[W] = [\sum_B W_B e_B]$.
What is uniquely fixed by $[W]$ is the set of homogeneous coordinates of $[W]$ with respect to $([e_B])$:
\begin{align}\label{EqHomogeneousCoords}
&\bigg\{(\lambda W_1',\dots,\lambda W_4')\,\bigg|\, W_\mu'\in C^\infty(\Xx),\lambda\in C^\infty(\Xx,\rr_{\neq 0}),\nonumber\\
&\qquad\qquad\qquad\qquad\qquad\qquad \bigg[\sum_{\nu =1}^4 W_\nu' e_\nu\bigg] = [W]\bigg\}\,.
\end{align}

We denote the set of projective frames on $\Xx$ by $\Pcal(\Xx)$.
Locally, elements $[G]\in C^\infty(\Xx,\PGL_5\rr)$ act transitively on a projective frame $([e_B])\in \Pcal(\Xx)$ via right multiplication:
\begin{equation}
([e_1],\dots,[e_5])\mapsto ([e_1],\dots,[e_5])\cdot [G]\,.
\end{equation}
This yields the projective frame bundle over $\Xx$, i.e., the principal $\PGL_5\rr$-bundle
\begin{equation}\label{EqProjectiveFrameBundle}
\PGL_5\rr\to \Pcal(\Xx)\to \Xx\,.
\end{equation}

A connection 1-form of this bundle uniquely fixes a gauge 1-form $A([x])$, which takes values in the Lie algebra $\pgl_5\rr$.
Projectivization yields the projective gauge 1-form $[A([x])]$,%
\footnote{Projective 1-forms on $\Xx$ are defined analogously to projective vector fields, i.e., they are equivalence classes $[\alpha]$ of 1-forms on $\Xx$, where $\alpha\sim \tilde{\alpha}$, if there exists $\lambda\in C^\infty(\Xx,\rr_{\neq 0})$: $\tilde{\alpha} = \lambda \alpha$ everywhere on $\Xx$.
Considering the projectivization of usual connections is similar to projective connections, for which we expect equivalent results.
For clarity we here focus on the former construction.} 
which non-uniquely decomposes for a given projective frame $([e_B])$ or rather its dual into the components $[A_B([x])]$.
We note that the gauge group $\PGL_5\rr$ contains the gauge group $\GL_4\rr$ of metric-affine gravity by means of an embedding similar to \Cref{EqOMatrices}, see also the end of \Cref{SecGaugeGroupReduction}.

Based on this, a projective $\PGL_5\rr$ gauge quantum field $\hat{A}=(U,\rho_A,\{[\hat{A}([x])]\,|\,[x]\in\RP^4\})$ on the Hilbert space $\Hcal$ is a projective quantum field, which locally transforms under a gauge transformation $[V]\in C^\infty(\RP^4,\PGL_5\rr)$ as
\begin{align}\label{EqGaugeTrafoGaugeField}
[\hat{A}_B([x])]\mapsto & [V([x])\cdot \hat{A}_B([x])\cdot V^{-1}([x])\nonumber\\
&\qquad\qquad\qquad - i V([x]) \cdot  \partial_B V^{-1}([x])]
\end{align}
for all $[x]\in\RP^4$.
Upon restriction to a geometry $(\Xx,\Oo)<(\RP^4,\PGL_5\rr)$, gauge transformations and their action \eqref{EqGaugeTrafoGaugeField} on $\hat{A}$ are restricted to $\Xx$.

Deformations and limits of geometries act on projective $\PGL_5\rr$ gauge quantum fields as for projective quantum fields.
The gauge transformation $[V]$ canonically acts on the field operators of the projective quantum field $\hat{\Ocal}$, which has no internal degrees of freedom, as
\begin{equation}\label{EqActionGaugeTrafoQuantumField}
[\hat{\Ocal}([x])]\mapsto [\rho([V([x])]) \hat{\Ocal}([x])]\,.
\end{equation}
A covariant derivative acting on the $[\hat{\Ocal}([x])]$ can be defined similarly to spin connections~\cite{Ortin:2015hya, Floerchinger:2021uyo}:
\begin{equation}\label{EqCovariantDerivativeMatter}
\hat{\nabla}_B([x]) = [\partial_B - i \tilde{\rho}([\hat{A}_B([x])])]\,,
\end{equation}
where $\tilde{\rho}$ denotes the Lie algebra representation corresponding to $\rho$.

\section{Gauge group reduction}\label{SecGaugeGroupReduction}
We turn to implications of certain geometry limits for gauge-matter interactions on Poincaré and homogeneous, curved Lorentz geometries.
Specifically, we show that on operator level only a subgroup conjugate to $\Pp(\GL_2\rr\times \GL_3\rr)$ within the full gauge group $\PGL_5\rr$ can act non-trivially on the field operators from the non- to the ultra-relativistic limits of Poincaré geometry, and only gauge bosons for the gauge subgroup $\Pp(\GL_4\rr\times \rr_{\neq 0})$ together with an Abelian $\PGL_5\rr$ subgroup isomorphic to $\rr^4$ survive e.g.~in the flat de Sitter limit.

First, we consider Poincaré geometry $\Gg((1),(3,1))$ and geometry deformations by
\begin{equation}\label{EqctauNonrelUltrarelInterpolation}
[c(\kappa)] = \Pp\left(\begin{matrix}
\kappa & & \\
 & 1_{3\times 3} &\\
& & \kappa
\end{matrix}\right)\quad \text{for} \quad \kappa > 0\,.
\end{equation}
Around \Cref{EqBoostLimit}, we have shown that $\lim_{\kappa\to 0} [c(\kappa)]_* \Gg((1),(3,1))$ is the (non-relativistic) Galilei geometry $\Gg((1),(1),(3))$ up to the permutation $\tau$.
It can be shown that in the $\kappa \to \infty$ limit, the geometry becomes the (ultra-relativistic) Carroll geometry $\Gg((1),(3),(1))$.
Therefore, the action of $[c(\kappa)]$ changes the coordinate speed of light and can be used as a geometry deformation to interpolate between the non- and ultra-relativistic limits of Poincaré geometry.

We say that the projective gauge quantum field $\hat{\Ocal}$ and the projective $\PGL_5\rr$ gauge quantum field $\hat{A}$ interact non-trivially in the non- and ultra-relativistic limits of Poincaré geometry, if for all $[x]\in\Xx((1),(3,1))$ both the $\kappa\to 0$ limit and the $\kappa\to \infty$ limit of the operator
\begin{equation}\label{EqGaugeIAOperatorckappadeform}
 \Ad_{U([c(\kappa)])}[\hat{\nabla}_B([x])\hat{\Ocal}([x])]_{B} 
\end{equation}
are non-trivial as operators acting on the Hilbert space.
Physically, this assumption can be motivated by analogy with gauge interactions in the SM.
Concerning the non-triviality of electromagnetic gauge-matter interactions in the non-relativistic limit, we note that many cross-sections for processes in quantum electrodynamics (QED) remain non-trivial in this limit, see e.g.~\cite{Peskin:1995ev}.
Non-relativistic quantum chromodynamics has been successfully employed in the description of a range of related particle production phenomena~\cite{Bodwin:1994jh, Cho:1995vh, Cho:1995ce, Butenschoen:2009zy}, and the $W^\pm,Z$ bosons are naturally described non-relativistically at energies below the electroweak symmetry breaking scale due to their comparably large mass.
The SM is also non-trivial in its Carroll limits: QED and other QFTs admit a description on Carroll geometry, see e.g.~\cite{Basu:2018dub, Bagchi:2019xfx, Banerjee:2020qjj} and note applications of this in the framework of hydrodynamics~\cite{Ciambelli:2018wre, Bagchi:2023rwd, Bagchi:2023ysc}.

We can now state the gauge group reduction theorem, which is central to this work.
Intuitively, considering limits of interaction operators among the gauge and matter fields deformed by $[c(\kappa)]$ probes the presence of gauge interactions on operator level in the non- and ultra-relativistic limits of Poincaré geometry.

\begin{thm}\label{ThmGaugeGroupReduction}
Let $\hat{\Ocal}$ be a projective quantum field, $\hat{A}$ a projective $\PGL_5\rr$ gauge quantum field and $\PGL_5\rr$ gauge transformations act on $[\hat{\Ocal}([x])]$ as in \Cref{EqActionGaugeTrafoQuantumField}. 
Assume $\hat{\Ocal}$ and $\hat{A}$ interact non-trivially in the non- and ultra-relativistic limits of Poincaré geometry.
Then maximally gauge bosons for the gauge subgroup $\Ad_\tau\Pp(\GL_2\rr\times \GL_3\rr)< \PGL_5\rr$ can act non-trivially on the projective correlators in $\Afrak_{\hat{\Ocal}}(\Xx((1),(3,1)))$ via covariant derivatives.
\end{thm}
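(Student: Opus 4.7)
The plan is to apply the generalized unitary transformation behaviour of \Cref{EqGlobalCovariance} separately to $\hat{A}_B$ and to $\hat{\Ocal}$ inside $[\hat{\nabla}_B\hat{\Ocal}]_B$, converting the conjugation by $U([c(\kappa)])$ into an explicit $\kappa$-dependent rescaling on each of its indices, and then to sort the resulting contributions by their weight under the $\zz$-grading of $\pgl_5\rr$ induced by $\Ad_{[c(\kappa)]}$.

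The central structural input is the decomposition of $\pgl_5\rr$ according to $\Ad_{[c(\kappa)]}$: since $c(\kappa)=\diag(\kappa,1,1,1,\kappa)$, the elementary matrices $E_{ij}\in \gl_5\rr$ transform with scalar eigenvalues $d_i/d_j\in \{1,\kappa^{\pm 1}\}$, where $d_0=d_4=\kappa$ and $d_1=d_2=d_3=1$. This yields the grading $\pgl_5\rr=\gfrak_{-1}\oplus\gfrak_0\oplus\gfrak_{+1}$ with $\Ad_{[c(\kappa)]}|_{\gfrak_n}=\kappa^n$. The zero-weight piece $\gfrak_0$ is the block-diagonal subalgebra preserving $\{0,4\}\sqcup\{1,2,3\}$, which after the permutation $\tau$ bringing this splitting to the natural $2+3$ block order is exactly the Lie algebra of $\Ad_\tau\Pp(\GL_2\rr\times\GL_3\rr)$; the two six-dimensional off-block subspaces $\gfrak_{\pm 1}$ carry the nontrivial $\kappa^{\pm 1}$-weights. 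Writing $\hat{A}_B=\sum_{ij}\hat{A}_B^{ij}E_{ij}$ and applying \Cref{EqGlobalCovariance}, the kinetic term $\partial_B\hat{\Ocal}$ and the $\gfrak_0$-interaction $\tilde\rho(\hat{A}_B^{(0)})\hat{\Ocal}$ transform with a common factor consisting of the cotangent rescaling $c_B(\kappa)$ at index $B$ together with $\rho([c(\kappa)^{-1}])$ on the internal indices of $\hat{\Ocal}$, while the $\gfrak_{\pm 1}$-interactions acquire an additional factor $\kappa^{\mp 1}$ from $\Ad_{[c(\kappa)^{-1}]}$ on the Lie-algebra index.

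The remaining step is a scaling comparison in the projective limit: pulling the common factor out of $\Ad_{U([c(\kappa)])}[\hat{\nabla}_B\hat{\Ocal}]_B$ leaves an $O(1)$ piece (kinetic term plus $\gfrak_0$-interaction) together with the $\gfrak_{+1}$-interaction scaled by $\kappa^{-1}$ and the $\gfrak_{-1}$-interaction scaled by $\kappa^{+1}$. For the projective $\kappa\to 0$ limit to be non-trivial as an operator on $\Hcal$, the $\gfrak_{+1}$-interaction must act trivially on the correlators in $\Afrak_{\hat{\Ocal}}(\Xx((1),(3,1)))$, since otherwise the limit is dominated by a bare off-block interaction stripped of its kinetic counterpart and so cannot represent a genuine covariant-derivative operator; the mirror argument at $\kappa\to\infty$ eliminates the $\gfrak_{-1}$-interaction. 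Only gauge bosons in $\gfrak_0$, i.e.~in $\Ad_\tau\Pp(\GL_2\rr\times\GL_3\rr)$, therefore survive. The main obstacle will be making this projective rescaling argument fully rigorous: because $[\hat{A}]$ and $[\hat{\Ocal}]$ are only defined modulo global $C^\infty(\RP^4,\rr_{\neq 0})$ prefactors, one must show that no single $\kappa$-dependent representative choice can render both the $\kappa\to 0$ and the $\kappa\to\infty$ limits simultaneously non-trivial when any $\gfrak_{\pm 1}$-component acts non-trivially -- a finite combinatorial check on the weight set $\{-1,0,+1\}$, where the $\zz$-grading does the decisive work.
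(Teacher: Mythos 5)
Your proof is correct and follows essentially the same route as the paper: conjugation by $[c(\kappa)]$ induces the weight decomposition of $\pgl_5\rr$ into the block-diagonal part and the two off-blocks scaling as $\kappa^{\pm 1}$, and demanding a non-trivial interaction in both the $\kappa\to 0$ and $\kappa\to\infty$ limits singles out the block-diagonal subalgebra, i.e.\ the Lie algebra of $\Ad_\tau\Pp(\GL_2\rr\times\GL_3\rr)$. The only notable difference is in how the projective-representative subtlety you flag at the end is resolved: the paper does not eliminate $\gfrak_{\pm 1}$ outright but lists the three homogeneous-weight cases ($B,C=0$; $A,B,D=0$; $A,C,D=0$) as the ones surviving both limits projectively, and then selects the block-diagonal case because it has maximal dimension among them --- which is precisely why the theorem is phrased with ``maximally.''
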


\begin{proof}
The generalized unitary transformation behavior \eqref{EqGlobalCovariance} yields for any $[x]\in\Xx((1),(3,1))$, $\alpha = 1,\dots,\dim\rho$ and any $B=1,\dots, 5$:
\begin{align}
&\sum_{C,\beta} c(\kappa)_{BC} \Ad_{U([c(\kappa)])}\tilde{\rho}_{\alpha\beta}([\hat{A}_C([x])])\hat{\Ocal}_\beta([x])\nonumber\\
& =\;  \sum_{\beta,\gamma}\tilde{\rho}_{\alpha\beta}([\hat{A}_B([c(\kappa)x])])\rho_{\beta\gamma}([c(\kappa)^{-1}])\hat{\Ocal}_\gamma([c(\kappa)x])\nonumber\\
& =\; \sum_{\beta,\gamma}\rho_{\alpha\beta}([c(\kappa)^{-1}])\nonumber\\
&\qquad\;\times \tilde{\rho}_{\beta\gamma}([c(\kappa)\hat{A}_B([c(\kappa)x])c(\kappa)^{-1}])\hat{\Ocal}_\gamma([c(\kappa)x])\,,
\end{align}
where representatives of the projective field operators are considered.
The non-triviality assumption and surjectivity of the map $\Xx((1),(3,1))\ni[x]\mapsto [c(\kappa)x]\in \Xx((1),(3,1))$ for all $\kappa >  0$ imply that upon their action on the field operators of $\hat{\Ocal}$ via the representation $\tilde{\rho}$ all conjugated gauge field operators
\begin{equation}
[c(\kappa) \hat{A}_B([x]) c(\kappa)^{-1}]
\end{equation}
must remain non-zero for both the $\kappa \to 0$ limit and the $\kappa\to\infty$ limit.
Therefore, maximally gauge field operators for the gauge subgroup $\Ad_{\tau}\Pp(\GL_2\rr\times\GL_3\rr)<\PGL_5\rr$ can interact non-trivially with the field operators of $\hat{\Ocal}$.
Indeed, with 
\begin{equation}
\Ad_{\tau^{-1}} ([c(\kappa)]) = \Pp\left(\begin{matrix}
\kappa \cdot 1_{2\times 2} & \\
 & 1_{3\times 3}
\end{matrix}\right)
\end{equation}
and $A,B,C,D$ corresponding matrix blocks:
\begin{equation}
\Ad_{\Ad_{\tau^{-1}}([c(\kappa)])} \Pp \left(\begin{matrix}
A & B\\
C & D
\end{matrix}\right)  = \Pp\left(\begin{matrix}
A & \kappa B\\
C/\kappa & D
\end{matrix}\right)\,,
\end{equation}
for which both the $\kappa \to 0$ and the $\kappa\to\infty$ limits are non-zero, if and only if one of the following cases holds:
\begin{enumerate}[(i)]
\item $B,C=0$, or
\item $A,B,D=0$, or
\item $A,C,D=0$.
\end{enumerate}
The subalgebra of $\pgl_5\rr$, which is given by projective matrices of type (i), has maximal dimension among these.

The extension of this statement to the full projective correlator algebra $\Afrak_{\hat{\Ocal}}(\Xx((1),(3,1)))$ is straight-forward.
\end{proof}

\Cref{ThmGaugeGroupReduction} provides a statement on operator level.
We therefore expect that the reduction of the $\PGL_5\rr$ gauge group to $\Ad_{\tau}\Pp(\GL_2\rr\times \GL_3\rr)$ is preserved in the classical limit and appears in corresponding action functionals.
Further considerations regarding the latter are deferred until \Cref{SecSimilaritiesSM}.

Shedding further light on the structure of the reduced gauge group, in \Cref{AppendixGroupIso} we prove that the map $\zeta:\rr_{\neq 0}\times \PGL_2\rr\times\PGL_3\rr\to \Pp(\GL_2\rr\times\GL_3\rr)$,
\begin{equation}\label{EqZeta}
\zeta(a,[g],[h]) = \Pp\left(\begin{matrix}
a^3 g & \\
 & a^{-2} h
\end{matrix}\right)\,,
\end{equation}
is a group isomorphism, where $g$ is an arbitrary of the two unit-determinant representatives of $[g]\in\PGL_2\rr$ and $h$ is the unique unit-determinant representative of $[h]\in\PGL_3\rr$.
This is analogous to $\GSM\cong \Ss(\Uu(2)\times \Uu(3))$~\cite{Georgi:1974sy,Baez:2009dj}.

A variant of \Cref{ThmGaugeGroupReduction} can also be proven for homogeneous, curved Lorentz geometries of the form
\begin{equation}
[g]_*\Gg(4,1) = ([g]\cdot \Xx(4,1), \Ad_{[g]} \Pp\Oo(4,1))
\end{equation}
for arbitrary $[g]\in\PGL_5\rr$.
For these, on operator level only gauge bosons for the gauge subgroup $\Ad_{[g]}  \Pp(\rr_{\neq 0}\times \GL_4\rr)$ or an Abelian gauge subgroup isomorphic to $\rr^4$ interact non-trivially with the field operators of $\hat{\Ocal}$ in the Poincaré limit.
The explicit statement along with its proof is given in \Cref{AppendixGaugeGrpReductionHomogLorentz}.
With the isomorphism $\Pp(\rr_{\neq 0}\times \GL_4\rr)\cong \GL_4\rr$, part of the reduced gauge group can be identified with part of the gauge group of metric-affine gravity~\cite{Hehl:1994ue}.

Finally, we note that for Poincaré geometry the presence of gauge interactions from non- to ultra-relativistic limits can be alternatively probed with
\begin{equation}
[\tilde{c}(\kappa)] = \Pp\left(\begin{matrix}
1_{4\times 4} & \\
& \kappa
\end{matrix}\right)
\end{equation}
for $\kappa > 0$ instead of $[c(\kappa)]$.
The same arguments as before then lead to the reduced gauge group $\Pp(\GL_4\rr\times \rr_{\neq 0})\cong \GL_4\rr$, which can be identified with part of the reduced gauge group for homogeneous, curved Lorentz geometries $[g]_*\Gg(4,1)$.
In fact, \Cref{LemmaRedGaugeGroupLimits} in \Cref{AppendixGaugeGrpReductionHomogLorentz} shows that the reduced gauge group for geometries of the latter type remains isomorphic to $\Pp(\rr_{\neq 0}\times \GL_4\rr)$ also in its geometry limits.
Still, we remind ourselves that the space-time support of projective quantum fields on $[g]_*\Gg(4,1)$ can reduce dimensionally in its geometry limits such as the Poincaré limit, so that $\hat{\Ocal}$ and $\hat{A}$ restricted to Poincaré geometry can not be fully recovered from their restriction to $[g]_*\Gg(4,1)$.

Considering Poincaré geometry itself, neither $\Ad_\tau\Pp(\GL_2\rr\times \GL_3\rr)$ nor $\Pp(\GL_4\rr\times \rr_{\neq 0})$ seems to be preferable.
The author does not have a decisive argument regarding the choice of $[c(\kappa)]$ or $[\tilde{c}(\kappa)]$ and focuses primarily on the former case and the corresponding reduced gauge group for Poincaré geometry.
While this will lead to similarities with the SM, we note that the other choice would lead to further similarities with gravity.
It remains to be investigated, in how far both can be mathematically realized simultaneously in the framework of projective quantum fields without internal degrees of freedom.
To conclude this section, we also note that the choice of $[c(\kappa)]$ or $[\tilde{c}(\kappa)]$ bears similarities with taking either the electric or the magnetic Carroll limit of QED~\cite{Bagchi:2016bcd}.%
\footnote{To be more explicit, given a 4-vector field $V=(\mathbf{V},V_4)$ in the notation of our paper, the electric limit corresponds to $V\mapsto (\epsilon \mathbf{V},V_4)$ for $\epsilon\to 0$, while the magnetic limit corresponds to $V\mapsto(\mathbf{V},\epsilon V_4)$ for $\epsilon\to 0$.}
Yet, while electric and magnetic Carroll limits are typically defined on field level, the action of $[c(\kappa)]$ or $[\tilde{c}(\kappa)]$ via the map $\Ad_U$ encodes the corresponding action on the space-time geometry in the projective setting.

\section{Global physical scale transformations}\label{SecScaleTrafos}
We now consider physical scale transformations on Poincaré geometry as gauge transformations.
The identification $\Xx((1),(3,1)) = \Abb^{3,1}$ has been via the map $\xi$ given in \Cref{EqDiffeoPsi}.
A physical scale transformation acts on $[x]\in\Xx((1),(3,1))$ as $\xi([x])\mapsto e^s \xi([x])$ for $s \in \rr$, which is equivalent to
\begin{equation}\label{EqScaleTrafoPoincare}
[x]\mapsto [d(s)\cdot x] := \Pp\left(\begin{matrix}
1 & \\
& e^s \cdot 1_{4\times 4}
\end{matrix}\right)\cdot [x]\,.
\end{equation}
Hence, commutativity with $U([d(s)])$ probes physical scale invariance on operator level.

The dilatations $[d(s)]$ also form a 1-parameter subgroup of the reduced gauge group $\Ad_{\tau}\Pp(\GL_2\rr\times \GL_3\rr)$ for Poincaré geometry and can act on the field operators of $\hat{\Ocal}$ via the global gauge transformation $[\hat{\Ocal}([x])]\mapsto [\rho([d(s)]) \hat{\Ocal}([x])]$.
We show that such gauge transformations change mass eigenspace contributions to the field operator representatives.
For this we write translations in the (temporal) 4-direction of $\Xx((1),(3,1))$ on operator level as $\exp(i\hat{H}y_4)$ for $y_4\in\rr$ and translations in the (spatial) 1-, 2- and 3-directions as $\exp(-i\mathbf{\hat{P}}\yy)$ for $\yy=(y_1,y_2,y_3)\in \rr^3$ with $\mathbf{\hat{P}}\yy=\hat{P}_1 y_1+\hat{P}_2 y_2 + \hat{P}_3 y_3$.
The squared mass operator then reads $\hat{M}^2 = \hat{H}^2-\mathbf{\hat{P}}^2$.
Furthermore, we write $U([d(s)]) = \exp(i\hat{D} s)$, where the operator $i\hat{D}$ satisfies the commutation relations
\begin{equation}
[i\hat{D},i\hat{H}] = i \hat{H}\,,\qquad [i\hat{D},i\hat{P}_j] = i\hat{P}_j
\end{equation}
for $j=1,2,3$, such that
\begin{equation}
\ad_{i\hat{D}} \hat{M}^2 := [i\hat{D},\hat{M}^2] = 2 \hat{M}^2\,.
\end{equation}
Therefore,
\begin{equation}\label{EqAdUMSq}
\Ad_{U([d(s)])}\hat{M}^2 = \exp(s\, \ad_{i\hat{D}}) \hat{M}^2 = e^{2s} \hat{M}^2\,,
\end{equation}
where we employed basic Lie theory~\cite{kirillov2008introduction} for the first equality.
The operator $\hat{M}^2$ is self-adjoint, so there exists a Hilbert space basis of mass eigenstates $|m\rangle$ with $\hat{M}^2|m\rangle = m^2|m\rangle$.
We note that due to \Cref{EqAdUMSq}:
\begin{align}
\hat{M}^2 U([d(s)])|m\rangle = &\;U([d(s)])\Ad_{U([d(s)^{-1}])}\hat{M}^2|m\rangle\nonumber\\
 =&\; e^{-2s} m^2 U([d(s)])|m\rangle\,,
\end{align}
so that 
\begin{equation}\label{EqUdsMinus1MassEigenstate}
U([d(s)])|m\rangle = |e^{-s} m\rangle\,.
\end{equation}

With respect to the mass eigenspaces of the Hilbert space, the field operator representatives $\hat{\Ocal}_\alpha([x])$ can be decomposed as
\begin{equation}\label{EqMassContributionsFieldOperators}
\hat{\Ocal}_\alpha([x]) = \int \dd m\int  \dd m' |m\rangle \langle m| \hat{\Ocal}_\alpha([x])|m'\rangle \langle m'|\,.
\end{equation}
On the other hand, the gauge transformed field operator representatives $\rho([d(s)])\hat{\Ocal}([x])$ have the following decomposition:
\begin{align}
&\sum_\beta \rho_{\alpha\beta}([d(s)])\hat{\Ocal}_\beta([x])\nonumber\\
&=\;\Ad_{U([d(s)^{-1}])}\hat{\Ocal}_\alpha([d(s)\cdot x])\nonumber\\
&=\; \int  \dd m\int \dd m' |m\rangle  \langle e^{-s} m| \hat{\Ocal}_\alpha([d(s)\cdot x])|e^{-s} m'\rangle \langle m'|\,,
\end{align}
where we employed the generalized unitary transformation behavior \eqref{EqGlobalCovariance} and \Cref{EqUdsMinus1MassEigenstate}.
Indeed, the gauge-transformed field operator representatives $\rho([d(s)])\hat{\Ocal}([x])$ come with a different mass eigenspace decomposition compared to \Cref{EqMassContributionsFieldOperators}.

As a special case, consider field operators $[\hat{\Ocal}_\alpha([x])]$ with a definite (effective) mass $m\in [0,\infty)$, so that for all $[x]\in \Xx((1),(3,1))$ and all $\alpha$: $[\hat{M}^2,\hat{\Ocal}_\alpha([x])] = m^2 \hat{\Ocal}_\alpha([x])$.
\Cref{EqAdUMSq} together with the generalized unitary transformation behavior \eqref{EqGlobalCovariance} yields
\begin{align}
&\bigg[\hat{M}^2,\sum_\beta \rho_{\alpha\beta}([d(s)])\hat{\Ocal}_\beta([x])\bigg] \nonumber\\
&\qquad= \Ad_{U([d(s)^{-1}])} [\Ad_{U([d(s)])}\hat{M}^2,\hat{\Ocal}_\alpha([d(s)\cdot x])]\nonumber\\
&\qquad= \Ad_{U([d(s)^{-1}])} [e^{2s} \hat{M^2},\hat{\Ocal}_\alpha([d(s)\cdot x])]\nonumber\\
&\qquad= e^{2s}m^2 \sum_\beta \rho_{\alpha\beta}([d(s)])\hat{\Ocal}_\beta([x])\,.
\end{align}
The gauge-transformed field operator representatives $\rho([d(s)])\hat{\Ocal}([x])$ have the altered squared mass $e^{2s} m^2$.

The matrices $[d(s)]$ do not commute with a $\PGL_2\rr$ subgroup of the reduced gauge group.
Specifically, using the group isomorphism $\zeta$ of \Cref{EqZeta}, \Cref{AppendixConformalWeight} shows that
\begin{equation}\label{EqGrpCommutatorNeqOne}
[d(s)^{-1} \, k^{-1}\, d(s)\, k] \neq [1]
\end{equation}
for $s\neq 0$ and $[1]\neq [k]\in \Ad_{\tau} \zeta(1,\PGL_2\rr,[1])$, which is isomorphic to $\PGL_2\rr$.
The group commutator \eqref{EqGrpCommutatorNeqOne} equals the projective unit matrix for all ${[k]\in \Ad_{\tau}\Pp(1_{2\times 2}\times \GL_3)}$, which is isomorphic to $\rr_{\neq 0}\times \PGL_3\rr$.
Therefore, the gauge bosons of a $\PGL_2\rr$ gauge subgroup do not commute with the global gauge transformations by $[d(s)]$.

The invariance under global gauge transformations by $[d(s)]$ can be spontaneously broken by non-zero expectation values of suitable operators.
This can imply non-vanishing masses for those field operators, which are not invariant under the gauge transformations by $[d(s)]$.
In particular, this applies to the $\PGL_2\rr$ gauge bosons.
Such a phenomenon would be reminiscent of the Stueckelberg mechanism in Weyl conformal geometry~\cite{Stueckelberg:1938zz, Ghilencea:2019jux}.
Details are left for future work.

We note that also for the considered non-compact gauge group $\Ad_{\tau}\Pp(\GL_2\rr\times \GL_3\rr)$, the spontaneous breaking of the global $[d(s)]$ gauge symmetry would not be in conflict with Elitzur's theorem~\cite{Elitzur:1975im}.
Indeed, if finitely many space-time points are considered, the latter implies that only locally varying gauge transformations must leave operators with non-vanishing expectation values invariant, at least for the maximal compact subgroup of the reduced gauge group, which is isomorphic to $\PO(2)\times \PO(3)$.%
\footnote{The author is aware only of variants of Elitzur's theorem for compact gauge groups, see e.g.~\cite{Elitzur:1975im, Itzykson:1989sx}.}

\section{Similarities with the Standard Model}\label{SecSimilaritiesSM}
Based on the gauge group reduction and gauged physical scale transformations, a range of structures related to projective quantum fields reveals similarities with the SM.
Using the results of~\cite{Spitz:2024similarities}, we first describe a classification of projective quantum fields based on their irreducibility.
This leads to the appearance of Dirac fermions, if Poincaré transformations are considered.
We discuss how gauge transformations in the reduced gauge group act on these.
Subsequently, implications for particle physics models in the action formalism are derived.

Projective quantum fields can be characterized according to irreducibility of the Lie algebra representation $\tilde{\rho}$ corresponding to $\rho$.
Along with Schur module-like constructions for projective quantum fields (detailed in~\cite{Spitz:2024similarities}), this leads to the description of the fundamental building blocks of projective quantum fields without internal degrees of freedom.
This is similar to how the spin representations appear for QFTs on Poincaré geometry.
The irreducible projective quantum fields are labelled by a pair of Young diagrams $(\lambda,\lambda')$, so that $\tilde{\rho} = \CP^4_{(\lambda,\lambda')}$.
The latter denotes the irreducible representation of $\pgl_5\rr$ constructed according to the Young diagrams $(\lambda,\lambda')$ from partly symmetrized, partly anti-symmetrized and dualized tensor products of $\CP^4_{\pgl_5\rr}$, for which $\pgl_5\rr$ acts via projective matrices on $\CP^4$.%
\footnote{For tensors of second or higher order, representatives of $\pgl_5\rr$ elements act on the tensor products of $\cc^5$, and are only finally projectivized.} 
We write $\rho=\rho_{(\lambda,\lambda')}$, if $\tilde{\rho} = \CP^4_{(\lambda,\lambda')}$.
Irreducible projective quantum fields with such $\tilde{\rho}$ are denoted as $\hat{\Ocal}_{(\lambda,\lambda')}$.
The particular case of interest in the present work will be treated in more detail below.

Projective quantum fields on Poincaré geometry can also be characterized according to their irreducible behavior under Poincaré transformations.%
\footnote{These can be regarded as the fundamental building blocks of projective quantum fields on Poincaré geometry, since all projective unitary representations of the Poincaré group decompose into a direct sum or direct integral of the irreducible projective unitary representations of the Poincaré group~\cite{mautner1950unitary}.}
 In fact, all fermionic projective quantum fields, for which $\tilde{\rho}$ is irreducible as a $\pgl_5\rr$ representation and the restriction $U|_{\PO((1),(3,1))}$ is irreducible as a representation of the Poincaré group, transform under Poincaré transformations as Dirac fermions, proven in~\cite{Spitz:2024similarities}.
Only one diagram of the pair $(\lambda,\lambda')$ can be non-empty and must consist of a single column with an uneven number of boxes $\#\lambda$ or $\#\lambda'$.
Leading for $\tilde{\rho}$ only to anti-symmetric tensor representations with an uneven number of $\CP^4_{\pgl_5\rr}$ factors and their duals, this demonstrates the spin-statistics relation for projective quantum fields as composites without internal degrees of freedom, constructed from projective quantum fields with $\tilde{\rho}=\CP^4_{\pgl_5\rr}$.
Bosonic such projective quantum fields would violate the spin-statistics relation for similar reasons~\cite{Spitz:2024similarities}.

Therefore, for the full gauge group $\PGL_5\rr$, the $\pgl_5\rr$-valued gauge field $[\hat{A}([x])]$ acts on the field operators $[\hat{\Ocal}_{(\lambda,\emptyset)}([x])]$ or $[\hat{\Ocal}_{(\emptyset,\lambda)}([x])]$ via the representations 
\begin{equation}
(\Pp\tilde{\Extalt}^{\#\lambda} \cc^5)_{\pgl_5\rr}\quad\mathrm{or}\quad (\Pp\tilde{\Extalt}^{\#\lambda} \cc^5)^*_{\pgl_5\rr}
\end{equation}
for uneven $\#\lambda$, respectively.
Considered for the reduced gauge group $\Pp(\GL_2\rr\times \GL_3\rr)$,%
\footnote{The conjugation of $\Pp(\GL_2\rr\times \GL_3\rr)$ by $\tau$ is a group isomorphism and yields an equivalence of representations.} 
we have with $\pfrak(\glfrak_2\rr \oplus \glfrak_3\rr)\cong \rr \oplus\pgl_2\rr\oplus \pgl_3\rr$ (due to \Cref{PropReducedGaugeGroupIso}) the isomorphisms
\begin{subequations}
\begin{align}
&\Pp\tilde{\Extalt}^1 \cc^5|_{\pfrak(\glfrak_2\rr\oplus\glfrak_3\rr)} \cong \CP^4|_{\pfrak(\glfrak_2\rr\oplus \glfrak_3\rr)}\nonumber\\
&\qquad\cong (\cc_{+1}\totimes \CP^1 \totimes 1) \oplus (\cc_{-2/3} \totimes 1 \totimes \CP^2)\,, \label{EqLambda1IrrepReduced}\\
&\Pp\tilde{\Extalt}^3 \cc^5|_{\pfrak(\glfrak_2\rr\oplus\glfrak_3\rr)} \nonumber\\
&\qquad\cong (\cc_{-2}\totimes 1\totimes 1)\oplus (\cc_{-1/3}\totimes \CP^{1,*}\totimes \CP^{2,*})\nonumber\\
&\qquad\qquad\quad\qquad\qquad\qquad\oplus (\cc_{+4/3}\totimes 1\totimes \CP^2)\,. \label{EqLambda3IrrepReduced}
\end{align}
\end{subequations}
Here, $\cc_{\kappa/3}$ for $\kappa\in \rr$ denotes the representation $\rr\ni a\mapsto \kappa a$, where we call $\kappa/3$ the weak hypercharge, by analogy with the SM.
$\CP^{m-1}:=\CP^{m-1}_{\pgl_m\rr}$ denotes the fundamental complex representation of $\pgl_m\rr$ with dual $\CP^{m-1,*}$ and the trivial representation is denoted $1$.
The decompositions in \Cref{EqLambda1IrrepReduced,EqLambda3IrrepReduced} are into irreducible representations of $\rr\oplus \pgl_2\rr\oplus \pgl_3\rr$ and the weak hypercharges are unique up to a linear map.

These results suggest a particle physics model, which on Poincaré geometry includes a projective $\PGL_5\rr$ gauge quantum field $\hat{A}$ and the fermionic, irreducible, Poincaré-irreducible projective quantum fields $\hat{\Ocal}_{(\lambda,\emptyset)}$ and $\hat{\Ocal}_{(\emptyset,\lambda)}$ for column-only $\lambda$ with $\#\lambda$ uneven.
As described around \Cref{EqHomogeneousCoords}, on Poincaré geometry the projective gauge quantum field $[\hat{A}_B([x])]$ uniquely fixes a non-projective gauge quantum field $\hat{A}_\mu(y)$, $\mu=1,\dots,4$, $y\in \Abb^{3,1}$, up to multiplication by $C^\infty(\Abb^{3,1},\rr_{\neq 0})$ elements.
An action functional constructed from the projective $[\hat{A}_B([x])]$ is thus the same as an action functional constructed from $\hat{A}_\mu(y)$, which is furthermore invariant under local Weyl rescalings and reflections.

The SM augmented by Weyl gauge symmetry has been studied e.g.~in~\cite{Ghilencea:2018thl, Ghilencea:2021lpa}, where the latter symmetry appears naturally in the SM for a zero Higgs mass parameter.
Einstein gravity and Higgs potentials can emerge for such models on Weyl conformal geometry in phases with spontaneously broken Weyl gauge symmetry~\cite{Ghilencea:2018dqd, Ghilencea:2019jux} and might be renormalizable~\cite{Stelle:1976gc}.
Explicit implications of Weyl gauge symmetry for our model are to be considered in future work.

\begin{table}
\begin{center}
\begin{tabular}{ |c|c|c|c| } 
 \hline
$(\#\lambda,\#\lambda')$  & $\GSM$ representation\\
  \hline\hline
  \vspace{-0.4cm} &\\
 $(1,0)$ & $\quad$ $((\mathbf{+1}\otimes \mathbf{2}\otimes \mathbf{1})\times (\mathbf{-2/3} \otimes \mathbf{1} \otimes \mathbf{3}))/\zz_6$ $\quad$  \\
  \vspace{-0.4cm} &\\
 \hline
  \vspace{-0.4cm} &\\
 $(0,1)$ & $\quad$ $((\mathbf{-1}\otimes \mathbf{\overline{2}}\otimes \mathbf{1})\times (\mathbf{+2/3}\otimes \mathbf{1} \otimes \mathbf{\overline{3}}))/\zz_6$ $\quad$ \\
  \vspace{-0.4cm} &\\
 \hline
  \vspace{-0.4cm} &\\
 $(3,0)$ & $\quad$ $\begin{array}{r}((\mathbf{-2}\otimes \mathbf{1}\otimes \mathbf{1})\times(\mathbf{-1/3}\otimes \mathbf{\overline{2}}\otimes \mathbf{\overline{3}})\qquad\\ \qquad\qquad\times\;(\mathbf{+4/3}\otimes \mathbf{1}\otimes \mathbf{3} ))/\zz_6\end{array}$ $\quad$ \\
  \vspace{-0.4cm} &\\
 \hline
  \vspace{-0.4cm} &\\
 $(0,3)$ & $\quad$ $\begin{array}{r}((\mathbf{+2}\otimes \mathbf{1}\otimes \mathbf{1})\times (\mathbf{+1/3}\otimes \mathbf{2}\otimes \mathbf{3})\qquad\\ \qquad\qquad\times \;(\mathbf{-4/3}\otimes \mathbf{1}\otimes\mathbf{\overline{3}}))/\zz_6\end{array}$ $\quad$ \\
 \hline
\end{tabular}
\end{center}
\caption{Summary of the $\GSM$ gauge transformation behavior of the fermionic, irreducible, Poincaré-irreducible $\hat{\Ocal}_{(\lambda,\lambda')}$ for column-only $\lambda, \lambda'$ with one Young diagram empty and an uneven number of boxes.
\mbox{Column 1:} Number of rows of the Young diagram pair.
\mbox{Column 2:} $\GSM$ gauge group representation, for $\Uu(1)$ denoted by the weak hypercharge, for $\SU(2)$ and $\SU(3)$ by dimension and complex conjugation.
}\label{TableTrafoBehavior}
\end{table}

The $\hat{A}_\mu(y)$ do not readily lead to well-defined Yang-Mills terms for an action functional, since the reduced gauge group $\Pp(\GL_2\rr\times \GL_3\rr)$ is non-compact~\cite{Weinberg:1996kr}.
As a projective quantum field, $\hat{A}(y)$ comes with the finite-dimensional, complex representation $\CP^4_{\pgl_5\rr}$ of $\pgl_5\rr$, which is equivalent to the representation $\CP^4_{\pgl_5\cc}$ of its complexification $\pgl_5\cc$. 
The situation is similar, if the gauge field $[\hat{A}_B([x])]$ interacts with the field operators of the projective quantum fields $\hat{\Ocal}_{(\lambda,\emptyset)}$ and $\hat{\Ocal}_{(\emptyset,\lambda)}$, which proceeds via the complex Lie algebra representations $\tilde{\rho}_{(\lambda,\emptyset)}$ and $\tilde{\rho}_{(\emptyset,\lambda)}$.
These are equivalent to the corresponding representations of $\pgl_5\cc$.

Considered for the reduced gauge group $\Pp(\GL_2\rr\times \GL_3\rr)$ with Lie algebra $\pfrak(\glfrak_2\rr\oplus \glfrak_3\rr)$, the representations are on Lie algebra level equivalent to representations of $\pfrak(\glfrak_2\cc\oplus \glfrak_3\cc)$.
Only gauge fields with values in the compact real form of $\pfrak(\glfrak_2\cc\oplus \glfrak_3\cc)$, which is
\begin{equation}
\pfrak(\ufrak(2)\oplus \ufrak(3))\cong \sfrak(\ufrak(2)\oplus \ufrak(3))\,,
\end{equation}
can provide the negative-definite kinetic terms for action functionals (positive-definite trace forms)~\cite{Weinberg:1996kr}.
The corresponding compact Lie group is $\Ss(\Uu(2)\times \Uu(3))\cong \GSM$~\cite{Baez:2009dj}, i.e., the SM gauge group.
\Cref{SecScaleTrafos} implies that the gauge bosons for a subgroup isomorphic to $\SU(2)$ do not commute with global, gauged physical scale transformations and can be massive, if this symmetry is spontaneously broken.
As for the SM, this can provide effective masses also for the fermions.

Concerning the anticipated physical equivalence of formulating the model with gauge algebra $\pfrak(\glfrak_2\rr\oplus \glfrak_3\rr)$ or $\sfrak(\ufrak(2)\oplus\ufrak(3))$, we note that they have the same real dimension and also come with equivalent finite-dimensional, complex representation theory.
Amongst others, this hints at similar particle excitation spectra.
Prominently, comparable arguments lead to the spinor representations of the complexified Lorentz algebra, which are for quantum fields on Poincaré geometry analogous to the representation $\tilde{\rho}$ for projective quantum fields.
Another example for considering the compact real form of Lie algebras is provided by the appearance of $\ufrak(N)$ gauge algebras associated with stacks of $p$-branes in string theory~\cite{Witten:1995im}, which rests upon their equivalent finite-dimensional, complex representation theory with $\glfrak_N\rr$ and charge quantization.

A summary of the behavior of the $\hat{\Ocal}_{(\lambda,\emptyset)}$ and $\hat{\Ocal}_{(\emptyset,\lambda)}$ under gauge transformations in $\GSM$ is given in \Cref{TableTrafoBehavior}.
This suggests the identification with one generation of the SM fermions, based on their behavior under gauge and Poincaré transformations, if both types of transformations are considered independently.
They are different from the SM fermions in that both transformations act via $\rho_{(\lambda,\emptyset)}$ or $\rho_{(\emptyset,\lambda)}$ and not via a tensor product of two representations.
Note also that no sterile neutrinos appear in this identification, distinct from spontaneously broken $\SU(5)$ gauge theory as a grand unified theory~\cite{Baez:2009dj}.
This is consistent with recent experiments demonstrating the absence of light sterile neutrinos~\cite{STEREO:2022nzk}.

\section{Outlook}\label{SecOutlook}

The findings of this work suggest a variety of further research questions.
First, which physical phenomena can result from interactions among the gauge bosons and the matter fields, that do not survive the non- or ultra-relativistic limits of Poincaré geometry, or that do not appear in the Poincaré limit of homogeneous, curved Lorentz geometries?

Furthermore, in how far can spontaneously, globally broken gauged physical scale symmetry provide a projective geometry-based mass generation mechanism for our model, which persists through explicit computations on the quantum level and is similar to the Higgs mechanism?
Its origin within a gauge subgroup of a non-Abelian gauge theory might hint at the absence of a corresponding Landau pole and asymptotic freedom instead~\cite{Weinberg:1996kr}.
Also, implications for potential relations among the gauge couplings of $\GSM$ gauge theory are to be investigated.

In the SM, gauge and Poincaré transformations act on the fermions via a tensor product of gauge and Poincaré group representations, in line with the Coleman-Mandula theorem, which is based on asymptotic states and Poincaré geometry~\cite{Coleman:1967ad}.
In our model, gauge and Poincaré transformations act on the fermions via $\rho_{(\lambda,\lambda')}$, gauge interactions in general take place on arbitrary geometries and asymptotic states cannot be defined on the compact $\RP^4$.
Therefore, the applicability of the Coleman-Mandula theorem appears questionable.
Yet, implications of the special relation between gauge and space-time symmetries are to be worked out for our model.
Apart from that, the particle content of one SM generation has been derived, but an explanation for the appearance of three generations and their mixing is lacking.

While our results have been derived mostly for Poincaré geometry, the framework depends smoothly on deformations of the space-time geometry.
We therefore expect that the results for Poincaré geometry hold in an approximate sense for more general space-time geometries close to Poincaré geometry.
Proximity can be defined via the structure groups with respect to available metrics on the space of closed subgroups of $\PGL_5\rr$~\cite{biringer2018metrizing}.
Probably, the results also hold approximately for weakly curved, inhomogeneous geometries. 
A suitable framework similar to the one for homogeneous geometries presented in this work and~\cite{Spitz:2024similarities} remains to be developed.

In the projective setting, geometry limits can act without singularities, from which gravity theories might benefit.
In this regard, Thomas-Whitehead gravity provides a projective description of torsion-free, metric gravity~\cite{whitehead1931representation, Eastwood2008, thomas1925projective, hall2007principle, Brensinger:2020gcv}.
It has also been noted that more general metric-affine $f(R)$ gravity comes with projectively invariant $f(R)$ action functionals~\cite{Julia:1998ys,Iosifidis:2018zjj}.
How does this relate to our model in light of the gauge bosons arising from connections of the projective frame bundle?
The analogy with a projective formulation of metric-affine gravity suggests that all particles in our model interact gravitationally.
Excitingly, we have actually shown that the gauge group for homogeneous, curved Lorentz geometries can partly reduce to the $\GL_4\rr$ of metric-affine gravity, which can also appear for Poincaré geometry.

It remains fascinating to the author that major non-trivial parts of the peculiar SM structure including hints toward the presence of gravity can be derived from projective geometry together with foundational physical concepts within the general framework of projective quantum fields.

\acknowledgments
I acknowledge fruitful exchange on the present work with S.~Flörchinger, L.~Hahn, M.~Salmhofer, S.~Schmidt and A.~Wienhard.
This work is funded by the Deutsche Forschungsgemeinschaft (DFG, German Research Foundation) under Germany’s Excellence Strategy EXC 2181/1–390900948 (the Heidelberg STRUCTURES Excellence Cluster) and the Collaborative Research Centre, Project-ID No.~273811115, SFB 1225 ISOQUANT.

\appendix

\section{Isomorphism between $\Pp(\GL_2\rr\times \GL_3\rr)$ and $\rr_{\neq 0}\times \PGL_2\rr\times \PGL_3\rr$}\label{AppendixGroupIso}
The following group isomorphisms can be established. 
The one for the special linear groups is presented for further intuition.

\begin{prop}\label{PropReducedGaugeGroupIso}
The map $\zeta:\rr_{\neq 0}\times \PGL_2\rr\times\PGL_3\rr\to \Pp(\GL_2\rr\times\GL_3\rr)$,
\begin{equation}
\zeta(a,[g],[h]) = \Pp\left(\begin{matrix}
a^3 g & \\
 & a^{-2} h
\end{matrix}\right)\,,
\end{equation}
is a group isomorphism, where $g$ is an arbitrary of the two unit-determinant representatives of $[g]\in\PGL_2\rr$ and $h$ is the unique unit-determinant representative of $[h]\in\PGL_3\rr$.
Similarly, the map $\zeta':\rr_{>0}\times\SL_2\rr\times\SL_3\rr \to \Ss(\GL_2\rr\times\GL_3\rr)\cong \Pp(\GL_2\rr\times \GL_3\rr)$,
\begin{equation}
\zeta'(a,g,h) = \left(\begin{matrix}
a^3 g & \\
 & a^{-2} h
\end{matrix}\right)\,,
\end{equation}
is a group isomorphism.
\end{prop}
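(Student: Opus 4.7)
The plan is to prove each isomorphism by verifying well-definedness, the homomorphism property, and bijectivity, treating $\zeta'$ first as a warm-up since the sign ambiguities that complicate $\zeta$ are absent there. For $\zeta'$, well-definedness is immediate because the domain consists of honest matrices. The homomorphism property reduces to the block-matrix identity
\begin{equation*}
\zeta'(a,g,h)\cdot\zeta'(a',g',h') = \bigl(a^3 a'^3 gg',\, a^{-2} a'^{-2} hh'\bigr) = \zeta'(aa',gg',hh')\,,
\end{equation*}
and the image lies in $\Ss(\GL_2\rr\times\GL_3\rr)$ since $\det(a^3 g)\det(a^{-2}h) = a^6 \cdot a^{-6} = 1$. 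I would then invert $\zeta'$ explicitly: given $(M,N)\in\Ss(\GL_2\rr\times\GL_3\rr)$ in the appropriate component, set $a := (\det M)^{1/6}\in\rr_{>0}$, $g := M/a^3 \in \SL_2\rr$, and $h := N/a^{-2} \in \SL_3\rr$, using $\det M\det N = 1$ to guarantee $\det g = \det h = 1$.

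For $\zeta$, the first nontrivial step is well-definedness. The representative of $[h]\in\PGL_3\rr$ is uniquely pinned down because the real cube root normalizing $\det h$ is unique, but $[g]\in\PGL_2\rr$ admits two unit-determinant representatives $\pm g$; I would verify that substituting $-g$ for $g$ yields the same class in $\Pp(\GL_2\rr\times\GL_3\rr)$ after accounting for the projective rescaling freedom on the target together with a compensating change of sign of $a$. The homomorphism property then reduces to the same block-matrix computation as for $\zeta'$, using that the product of two unit-determinant blocks retains unit determinant.

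For bijectivity of $\zeta$ I would pass through the canonical identification $\Pp(\GL_2\rr\times\GL_3\rr)\cong\Ss(\GL_2\rr\times\GL_3\rr)$ obtained by rescaling a representative $\diag(M,N)$ by the unique real fifth root $\mu := (\det M\,\det N)^{-1/5}$, which normalizes the combined determinant to $1$. From the normalized representative one extracts $a$ via $a^6 = \det(\mu M)$ and reads off $(g,h) = (\mu M/a^3,\, \mu N/a^{-2})$, both of unit determinant by construction.

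The main obstacle will be sign and component bookkeeping: $a$ is determined by $a^6$ only up to a sign, and I would verify that this residual $\zz_2$ freedom matches exactly the freedom in choosing between $\pm g$ as representative of $[g]\in\PGL_2\rr$, thereby closing the loop between well-definedness and surjectivity. Likewise, the two connected components of $\PGL_2\rr$ (distinguished by the sign of $\det g$ for any representative, preserved by the projective rescaling since the dimension is even) must be matched with the components of the target, whereas $\PGL_3\rr$ and $\rr_{>0}$ are connected and simplify that side of the analysis.
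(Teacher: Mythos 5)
Your overall route---normalize determinants by real roots to pass between $\Pp(\GL_2\rr\times\GL_3\rr)$ and $\Ss(\GL_2\rr\times\GL_3\rr)$, read off $a,g,h$ from the normalized blocks, and track the residual sign---is essentially the paper's own proof, which likewise writes a general element as $\Pp(\tilde a^5 g,\tilde b h)$ with $g\in\SL_2\rr$, $h\in\SL_3\rr$ and $\tilde a,\tilde b>0$; the only cosmetic difference is that you construct an explicit inverse where the paper computes the kernel.

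The genuine problem is the step you defer as ``sign and component bookkeeping'': it is the crux, and it cannot be closed as you plan. (i) Well-definedness of $\zeta$ fails as you describe it: $\Pp(-a^3g,\,a^{-2}h)$ and $\Pp(a^3g,\,a^{-2}h)$ are \emph{distinct} classes (a global scalar $\lambda$ would have to equal $-1$ on the first block and $+1$ on the second), and ``a compensating change of sign of $a$'' is not available in a well-definedness check, since $a$ is part of the input. What your observation actually shows is that the assignment descends from $\rr_{\neq 0}\times\SL_2\rr\times\SL_3\rr$ only to the quotient by $(a,g,h)\sim(-a,-g,h)$, which is isomorphic to the connected group $\rr_{>0}\times\SL_2\rr\times\SL_3\rr$ via $(a,g,h)\mapsto(|a|,\sgn(a)g,h)$---not to $\rr_{\neq 0}\times\PGL_2\rr\times\PGL_3\rr$. (ii) Surjectivity fails onto the stated targets: every element in the image of $\zeta$ or $\zeta'$ has upper-left block of determinant $\lambda^2a^6>0$, whereas $\Ss(\GL_2\rr\times\GL_3\rr)$ and $\Pp(\GL_2\rr\times\GL_3\rr)$ contain classes such as $(\diag(-1,1),\diag(-1,1,1))$ with $\det M<0$, and $\sgn\det M$ is a projective invariant (since $\det(\lambda M)=\lambda^2\det M$) separating two connected components of the target, only one of which is hit. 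Your inverse $a:=(\det M)^{1/6}$ tacitly assumes $\det M>0$ (``in the appropriate component''), so it inverts $\zeta'$ only on the identity component; note also that $\rr_{>0}\times\SL_2\rr\times\SL_3\rr$ is connected, so it cannot be isomorphic to the two-component group $\Ss(\GL_2\rr\times\GL_3\rr)$. Relatedly, a class $[g]\in\PGL_2\rr$ with negative-determinant representatives has \emph{no} unit-determinant representative at all, so $\zeta$ is not even defined on that component unless one reads $\PGL_2\rr$ as $\SL_2\rr/\{\pm 1_{2\times 2}\}$, as the paper implicitly does. The correct statement your argument (and the paper's) establishes is that $\zeta'$ is an isomorphism onto the identity component of $\Ss(\GL_2\rr\times\GL_3\rr)$; you should prove and assert that, rather than promising a matching of $\zz_2$'s and components that does not exist.
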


\begin{proof}
Clearly, $\zeta$ is a Lie group homomorphism. 
A general element of $\Pp(\GL_2\rr\times \GL_3\rr)$ can be written as
\begin{equation}\label{EqPropReducedGaugeGroupIsoGeneralShape}
\Pp\left(\begin{matrix}
\tilde{a}^5 g  & \\
 & \tilde{b} h
\end{matrix}\right) = \Pp\left(\begin{matrix}
\tilde{a}^5\tilde{b}^{-1} g & \\
 & h
\end{matrix}\right)
\end{equation}
for $\tilde{a},\tilde{b} \in (0,\infty)$ and $g\in \SL_2\rr$, $h\in \SL_3\rr$, noting that $-1_{2\times 2}\in \SL_2\rr$. 
Setting $a := \tilde{a} \tilde{b}^{-1/5}$, the matrix \eqref{EqPropReducedGaugeGroupIsoGeneralShape} equals
\begin{equation}
\Pp\left(\begin{matrix}
a^5 g & \\
 & h
\end{matrix}\right) = \Pp\left(\begin{matrix}
a^3 g & \\
 & a^{-2} h
\end{matrix}\right)\,,
\end{equation}
which is in the image of $\zeta$; the map $\zeta$ is surjective.
In particular, this parametrization of $\Pp(\GL_2\rr \times \GL_3\rr)$ elements via $a\in (0,\infty)$, $g\in \SL_2\rr$, $h\in \SL_3\rr$ is the same as the parametrization by $a \in \rr_{\neq 0}$, $[g]\in \SL_2\rr / \zz_2$, $h\in \SL_3\rr$, and $\PGL_2\rr\cong \SL_2\rr/\zz_2$, $\PGL_3\rr\cong \SL_3\rr$.
To show injectivity, we compute the kernel and set
\begin{equation}\label{Eqetakernel}
[1] = \zeta(a,[g],[h]) = \Pp\left(\begin{matrix}
a^3 g & \\
& a^{-2} h
\end{matrix}\right)\,,
\end{equation}
for some $a\in (0,\infty)$ and $g\in\SL_2\rr$, $h\in \SL_3\rr$. 
\Cref{Eqetakernel} yields $a^3 g = a^{-2}h = b\cdot 1_{5\times 5}$ for some $b \neq 0$.
The only $g\in \SL_2\rr$ proportional to $1_{2\times 2}$ are $\pm 1_{2\times 2}$, the only $h\in\SL_3\rr$ proportional to $1_{3\times 3}$ is $1_{3\times 3}$.
Hence, $b = a^{-2}$ and $b = \pm a^3$, such that $a = \pm 1$. 
But $a\in (0,\infty)$, such that $a = 1$ and $b = 1$, which implies $g = 1_{2\times 2}$. 
The kernel of $\zeta$ is indeed trivial.

The statement for $\zeta'$ has been proven implicitly.
\end{proof}

Other than for the isomorphism $\Ss(\Uu(2)\times \Uu(3))\cong \GSM = (\Uu(1)\times\SU(2)\times\SU(3))/\zz_6$~\cite{Georgi:1974sy,Baez:2009dj}, no quotient by $\zz_6$ appears for the map $\zeta'$, since the only roots of unity in $\SL_2\rr$ are $\pm 1_{2\times 2}$, which are taken care of by restricting the Abelian subgroup to $\rr_{>0}$, and $\SL_3\rr$ has trivial center.

\section{Gauge group reduction for homogeneous, curved Lorentz geometries}\label{AppendixGaugeGrpReductionHomogLorentz}
In this appendix we derive the gauge group reduction for homogeneous, curved Lorentz geometries of the form
\begin{equation}
[g]_*\Gg(4,1) = ([g]\cdot \Xx(4,1), \Ad_{[g]} \Pp\Oo(4,1))
\end{equation}
for some $[g]\in\PGL_5\rr$.
Poincaré geometry $\Gg((1),(3,1))$ is a limit of this, see e.g.~\Cref{EqBoostLimit}.

We consider geometry deformations of $\Gg(4,1)$ by
\begin{equation}\label{EqctauGeneralLorentzian}
[c'(\kappa)] = \Ad_{[g]}[c(\kappa)]=[g]\cdot \Pp\left(\begin{matrix}
\kappa &\\
& 1_{4\times 4}
\end{matrix}\right)\cdot [g^{-1}]
\end{equation}
for $0<\kappa\leq 1$.
A straight-forward computation shows that the $\kappa\to 0$ limit of the deformed geometry $[c'(\kappa)]_*\Gg(4,1)$ is Poincaré geometry $\Gg((1),(3,1))$, see also~\cite{Spitz:2024similarities}.

Similarly to \Cref{ThmGaugeGroupReduction}, the following statement can be derived for the Poincaré limit of the geometry $[g]_*\Gg(4,1)$.
Intuitively, only gauge bosons for a gauge subgroup isomorphic to $\Pp(\rr_{\neq 0}\times \GL_4\rr) \cong \GL_4\rr$  or an Abelian gauge subgroup isomorphic to $\rr^4$ interact non-trivially with matter field operators in the Poincaré limit.

\begin{thm}\label{ThmGaugeGroupRestrictionExtensionLorentzian}
Let $\hat{\Ocal}$ be a projective quantum field, $\hat{A}$ a projective $\PGL_5\rr$ gauge quantum field and $\PGL_5\rr$ gauge transformations act on $[\hat{\Ocal}([x])]$ as in \Cref{EqActionGaugeTrafoQuantumField}. 
Consider the geometry $[g]_*\Gg(4,1)$.
Then maximally the action of gauge bosons on the projective correlators in $\Afrak_{\hat{\Ocal}}([g]\cdot \Xx(4,1))$ via covariant derivatives for the gauge subgroup $\Ad_{[g]}\Pp(\rr_{\neq 0}\times \GL_4\rr) \cong \GL_4\rr$ or an Abelian gauge subgroup isomorphic to $\rr^4$ remains non-trivial in their Poincaré limit.
\end{thm}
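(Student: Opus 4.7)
The plan is to mirror the proof of \Cref{ThmGaugeGroupReduction} with two modifications. First, I would reduce to the untwisted case: since $[c'(\kappa)]=\Ad_{[g]}[c(\kappa)]$ with $[c(\kappa)]=\Pp\diag(\kappa,1_{4\times 4})$ as in \eqref{EqctauGeneralLorentzian}, conjugating all field operators by $[g^{-1}]$ reduces the task to analyzing the action of $\Ad_{c(\kappa)}$ on $\pgl_5\rr$ in the standard basis, after which one conjugates the resulting reduced subgroups back by $[g]$. Second, unlike in \Cref{ThmGaugeGroupReduction} only the one-sided limit $\kappa\to 0$ is relevant here, since the theorem demands non-triviality only in the Poincar\'e limit of $[g]_*\Gg(4,1)$, rather than in a two-sided interpolation between non- and ultra-relativistic regimes.

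Next I would apply the generalized unitary transformation behavior \eqref{EqGlobalCovariance} both to the gauge field $[\hat{A}_B([x])]$ (through the adjoint action implicit in \eqref{EqGaugeTrafoGaugeField}) and to $[\hat{\Ocal}([x])]$, in close parallel to the central computation of the proof of \Cref{ThmGaugeGroupReduction}. The non-triviality hypothesis in the Poincar\'e limit then becomes the requirement that for each $B$ and each $[x]\in\Xx(4,1)$ the projective class
\begin{equation*}
\lim_{\kappa\to 0}[\,c(\kappa)\,\hat{A}_B([c(\kappa)\cdot x])\,c(\kappa)^{-1}\,]
\end{equation*}
exists as a non-zero element of $\pgl_5\rr$ that still acts non-trivially on $\hat{\Ocal}$ via $\tilde{\rho}$.

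Then I would carry out the linear-algebraic analysis of $\Ad_{c(\kappa)}$ on $\gl_5\rr$. Writing a generic matrix in $(1+4)$-block form $M=\left(\begin{smallmatrix}a & \mathbf{b}^T\\ \mathbf{c} & D\end{smallmatrix}\right)$ with $a\in\rr$, $\mathbf{b},\mathbf{c}\in\rr^4$, $D\in\gl_4\rr$, a direct computation gives $c(\kappa)Mc(\kappa)^{-1}=\left(\begin{smallmatrix}a & \kappa\mathbf{b}^T\\ \kappa^{-1}\mathbf{c} & D\end{smallmatrix}\right)$. A non-trivial projective $\kappa\to 0$ limit exists in exactly two mutually incomparable maximal cases: (i) $\mathbf{c}=0$, where no rescaling is needed and the limit is block-diagonal (the $\mathbf{b}$-part vanishing automatically), giving a non-zero element of the block-diagonal subalgebra of $\Pp(\rr_{\neq 0}\times\GL_4\rr)$; (ii) $a=\mathbf{b}=D=0$ with $\mathbf{c}\neq 0$, where multiplying the matrix by $\kappa$ before taking the limit yields a non-zero lower-left block, i.e.\ an Abelian subalgebra isomorphic to $\rr^4$. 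Any configuration with both $\mathbf{c}\neq 0$ and $(a,D)\neq 0$ forces the rescaling of case (ii), which collapses all non-lower-left entries to zero, so the two maximal options cannot be merged into a larger surviving subalgebra — this is the structural source of the disjunction in the theorem statement, in contrast to the single-case conclusion of \Cref{ThmGaugeGroupReduction}.

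Finally, I would conjugate back by $[g]$ to recover $\Ad_{[g]}\Pp(\rr_{\neq 0}\times\GL_4\rr)\cong\GL_4\rr$ and an Abelian $\Ad_{[g]}(\rr^4)\subset\PGL_5\rr$, and extend the statement from single field operators to the full correlator algebra $\Afrak_{\hat{\Ocal}}([g]\cdot\Xx(4,1))$ by the same multiplicative argument as at the end of the proof of \Cref{ThmGaugeGroupReduction}. The main obstacle I anticipate is the projective bookkeeping in the case analysis — in particular, verifying that the only rescalings by functions of $\kappa$ that produce finite non-zero limits are the integer powers $\kappa^0$ and $\kappa^1$ identified above, and that case (ii) is genuinely maximal and not a degeneration of case (i) — but no new techniques beyond those used in \Cref{ThmGaugeGroupReduction} are required.
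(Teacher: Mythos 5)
Your proposal is correct and follows essentially the same route as the paper: the same deformation $[c'(\kappa)]=\Ad_{[g]}[c(\kappa)]$, the same application of the generalized unitary transformation behavior to isolate the conjugated gauge field operators, the same $(1+4)$-block computation $\Ad_{c(\kappa)}\left(\begin{smallmatrix}a & B\\ C & D\end{smallmatrix}\right)=\left(\begin{smallmatrix}a & \kappa B\\ C/\kappa & D\end{smallmatrix}\right)$, and the same case split ($C=0$ yielding the block-diagonal $\Pp(\rr_{\neq 0}\times\GL_4\rr)$ part, $C\neq 0$ collapsing projectively to the Abelian $\rr^4$ lower-left block). The only cosmetic differences are that you conjugate by $[g^{-1}]$ up front rather than carrying $\Ad_{[g]}$ through the computation, and that you spell out the correlator-algebra extension, which the paper leaves implicit in this appendix.
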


\begin{proof}
Consider the deformed gauge interaction operator
\begin{align}
&\sum_{C,\beta} c_{BC}'(\kappa)\Ad_{U([c'(\kappa)])} \tilde{\rho}_{\alpha\beta}([\hat{A}_C([x])]) \hat{\Ocal}_\beta([x])\nonumber\\
& =\; \sum_{\beta,\gamma}\tilde{\rho}_{\alpha\beta}([\hat{A}_B([c'(\kappa)x])]) \rho_{\beta\gamma}([c'(\kappa)^{-1}])\hat{\Ocal}_\gamma([c'(\kappa)x])\nonumber\\
& =\; \sum_{\beta,\gamma}\rho_{\alpha\beta}([c'(\kappa)^{-1}])\nonumber\\
&\quad\times \tilde{\rho}_{\beta\gamma}([c'(\kappa)\hat{A}_B([c'(\kappa)x])c'(\kappa)^{-1}]) \hat{\Ocal}_\gamma([c'(\kappa)x])\,.
\end{align}
With corresponding matrix blocks we compute
\begin{equation}\label{EqBlocksDeSitter}
\Ad_{[c'(\kappa)]}\Ad_{[g]} \Pp\left(\begin{matrix}
a & B\\
C & D
\end{matrix}\right) = \Ad_{[g]}\Pp\left(\begin{matrix}
a & \kappa B\\
C/\kappa & D
\end{matrix}\right)\,,
\end{equation}
where $a\in \rr$.
The projective matrix \eqref{EqBlocksDeSitter} has the well-defined $\kappa\to 0$ limit
\begin{equation}
\Ad_{[g]}\Pp\left(\begin{matrix}
a & 0\\
0 & D
\end{matrix}\right)\,,
\end{equation}
if $C=0$.
If $C\neq 0$, the $\kappa\to 0$ limit of \eqref{EqBlocksDeSitter} reads
\begin{equation}
\Ad_{[g]}\Pp\left(\begin{matrix}
0 & 0\\
C & 0
\end{matrix}\right)\,.
\end{equation}
Therefore, only gauge bosons for the gauge subgroup $\Ad_{[g]}\Pp(\rr_{\neq 0}\times \GL_4\rr)$ or an Abelian gauge subgroup isomorphic to $\rr^4$ can act non-trivially on the field operators of $\hat{\Ocal}$ in this limit.
\end{proof}

We call $\Ad_{[g]}\Pp(\rr_{\neq 0}\times \GL_4\rr)$ together with $\rr^4$ the reduced gauge group for homogeneous, curved Lorentz geometries.
It is not isomorphic to the reduced gauge group $\Pp(\GL_2\rr\times \GL_3\rr)$ of \Cref{ThmGaugeGroupReduction} due to the different assumptions underlying its construction.

According to \Cref{ThmGaugeGroupRestrictionExtensionLorentzian}, deformations of geometries conjugate the reduced gauge group within $\PGL_5\rr$. 
Also in limits of geometries, it remains isomorphic to $\Pp(\rr_{\neq 0}\times \GL_4\rr)$ together with $\rr^4$, as the following lemma demonstrates.

\begin{lem}\label{LemmaRedGaugeGroupLimits}
Under the assumptions of \Cref{ThmGaugeGroupRestrictionExtensionLorentzian}, the reduced gauge group remains isomorphic to $\Pp(\rr_{\neq 0}\times\GL_4\rr)$ or an Abelian subgroup of $\PGL_5\rr$ isomorphic to $\rr^4$ in limits of the geometry $\Gg(4,1)$.
\end{lem}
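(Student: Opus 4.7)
The plan is to extend the probing analysis underlying Theorem~\ref{ThmGaugeGroupRestrictionExtensionLorentzian} to a limit geometry $\Gg'=\lim_{n\to\infty}[g_n]_*\Gg(4,1)$ with $[g_n]\in\PGL_5\rr$. At each finite $n$ the reduced gauge group for $[g_n]_*\Gg(4,1)$ is $\Ad_{[g_n]}\Pp(\rr_{\neq 0}\times\GL_4\rr)$, together with the Abelian $\Ad_{[g_n]}\rr^4$, and both are tautologically isomorphic to the target groups via the inner automorphism $\Ad_{[g_n]}$. The reduced gauge group for $\Gg'$ arises as the conjugacy limit of these subgroups along $[g_n]$, so it suffices to verify that the isomorphism class survives this limit.

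The main step is to employ a Cartan (KAK) decomposition $[g_n]=[k_n][a_n][k_n']$, with $[k_n],[k_n']\in\PO(5)$ compact and $[a_n]$ diagonal with positive entries. Compactness of $\PO(5)$ lets me pass to a subsequence along which $[k_n]\to[k]$ and $[k_n']\to[k']$, so that the possibly divergent behaviour is absorbed into $[a_n]$. Geometrically, $\Pp(\rr_{\neq 0}\times\GL_4\rr)$ is the simultaneous stabilizer in $\PGL_5\rr$ of the projective point $[e_0]\in\RP^4$ and of the complementary projective hyperplane $[\Span(e_1,\ldots,e_4)]$, a $16$-dimensional Lie subgroup. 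Under $[g_n]$, this pair is transported through $\RP^4\times(\RP^4)^*$, and by compactness converges subsequentially to some limit pair $([p_\infty],[H_\infty])$. Whenever $[p_\infty]\notin[H_\infty]$, the stabilizer of the limit pair is itself isomorphic to $\Pp(\rr_{\neq 0}\times\GL_4\rr)$, and since conjugacy limits preserve Lie algebra dimension~\cite{trettel2019families}, it coincides with the conjugacy limit of $\Ad_{[g_n]}\Pp(\rr_{\neq 0}\times\GL_4\rr)$, which is hence isomorphic to $\Pp(\rr_{\neq 0}\times\GL_4\rr)$.

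The principal obstacle is the degenerate case $[p_\infty]\in[H_\infty]$, where the stabilizer of the limit pair has dimension strictly greater than $16$ and the conjugacy limit is only a proper subgroup thereof. Here I would change to a basis adapted to the degenerate pair and track which entries of the defining block-diagonal matrices survive rescaling by $[a_n]$, using the classification of conjugacy limits of $\PO(4,1)$ in~\cite{cooper2018limits} to restrict the admissible asymptotic form of $[a_n]$. I expect the outcome to be a dichotomy: either the block-diagonal Levi structure is preserved and the limit remains isomorphic to $\Pp(\rr_{\neq 0}\times\GL_4\rr)$, or the Levi structure collapses completely and only the rank-one lower-triangular matrices already identified in the proof of Theorem~\ref{ThmGaugeGroupRestrictionExtensionLorentzian} survive, forming the Abelian subgroup isomorphic to $\rr^4$. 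This case analysis is the technical core of the argument.
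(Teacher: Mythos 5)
Your first step coincides with the paper's: a $KAK$ decomposition of the deforming sequence with $K=\PO(5)$, extraction of convergent subsequences of the compact factors, and reduction of the problem to conjugation by the diagonal part $[a_n]$. From there the two arguments diverge. The paper disposes of the diagonal factor by observing that for every target element $[g]$ of $\Ad_{[l']}\Pp(\rr_{\neq 0}\times\GL_4\rr)$ one can choose a preimage sequence $[g_n]$ inside that group with $\Ad_{[a_n]}[g_n]=[g]$, so that the conjugacy limit reproduces the group itself; your point-hyperplane stabilizer picture is a genuinely different and more geometric way of organizing the same step, and your non-degenerate case ($[p_\infty]\notin[H_\infty]$, transitivity of $\PGL_5\rr$ on non-incident pairs, dimension preservation of conjugacy limits) is sound.

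The gap is the degenerate case $[p_\infty]\in[H_\infty]$, which you correctly identify as the technical core but do not prove: ``I expect the outcome to be a dichotomy'' is a conjecture, not an argument, and it is exactly at this point that your route and the paper's part ways (the paper's constant preimage sequences are unavailable precisely when the transported pair degenerates). Moreover, the dichotomy you anticipate is internally inconsistent with the tool you invoke: since conjugacy limits preserve Lie algebra dimension, the $16$-dimensional group $\Ad_{[g]}\Pp(\rr_{\neq 0}\times\GL_4\rr)$ cannot ``collapse completely'' to a $4$-dimensional Abelian group, so the second branch of your dichotomy is vacuous. The $\rr^4$ appearing in the statement of \Cref{LemmaRedGaugeGroupLimits} is not a possible degeneration of the $\GL_4\rr$ factor; it is the separate Abelian component of the reduced gauge group identified in \Cref{ThmGaugeGroupRestrictionExtensionLorentzian}, which must be treated as its own piece (as the paper does in its final sentence) and which your proposal omits entirely. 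To close the argument along your lines you would have to show directly that in the incident case the $16$-dimensional conjugacy limit inside the ($17$-dimensional) stabilizer of the degenerate pair is still isomorphic to $\Pp(\rr_{\neq 0}\times\GL_4\rr)$, or else show that incident limits cannot occur for sequences realizing geometry limits of $\Gg(4,1)$; neither is done.
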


\begin{proof}
We employ the $KAK$ decomposition of $\PGL_5\rr$ for $K=\PO(5)$, such that $K$ is compact and $A$ is the subgroup of diagonal projective matrices in $\PGL_5\rr$, see e.g.~\cite{knapp2001representation}.
Let $[b_n]\in\PGL_5\rr$, which we can decompose as $[b_n] = [k_n a_n l_n]$, where $[k_n],[l_n]\in K$, $[a_n]\in A$.
The sequences $[k_n],[l_n]$ have subsequences $[k_{n_j}],[l_{n'_j}]$ converging to some $[k'],[l']\in K$.
Conjugation of the reduced gauge group $\Pp(\rr_{\neq 0}\times\GL_4\rr)$ with $[k']$ or $[l']$ is an isomorphism.

The adjoint action of sequences $[a_n]\in A$ on the (non-compact) reduced gauge group $\Ad_{[l']}\Pp(\rr_{\neq 0}\times\GL_4\rr)$ leaves it invariant.
Indeed, for any $[g]\in \Ad_{[l']}\Pp(\rr_{\neq 0}\times\GL_4\rr)$ we can choose a sequence $[g_n]\in \Ad_{[l']}\Pp(\rr_{\neq 0}\times\GL_4\rr)$, such that $\Ad_{[a_n]}[g_n] = [g]$ for all $n$.
Therefore, $\lim_{n\to\infty}\Ad_{[a_n l']}\Pp(\rr_{\neq 0}\times\GL_4\rr)$ is isomorphic to $\Ad_{[l']}\Pp(\rr_{\neq 0}\times\GL_4\rr)$.

For the $\rr^4$ subgroup, the argument using the possibility to choose sequences $[g_n]$ as for limits via $[a_n]$ can be repeated.
\end{proof}

Considering \Cref{LemmaRedGaugeGroupLimits}, it is to be regarded that geometry limits quite generally reduce the dimension of the space-time support of projective quantum fields and their projective correlator algebras.
In such cases, projective quantum fields restricted to limit geometries are only partially given by geometry limits of projective quantum fields on other geometries.

\section{Commutativity with physical scale transformations}\label{AppendixConformalWeight}
In \Cref{SecScaleTrafos} we have discussed global gauge transformations given by physical scale transformations.
With regard to the reduced gauge group $\Ad_{\tau}\Pp(\GL_2\rr\times \GL_3\rr)$, the following related statement is of interest.

\begin{lem}\label{LemmaWeakInteractionsBreakScaleInvariance}
Let $\zeta:\rr_{\neq 0}\times\PGL_2\rr\times \PGL_3\rr\to \Pp(\GL_2\rr\times\GL_3\rr)$ be the isomorphism of \Cref{PropReducedGaugeGroupIso}.
For $s\neq 0$ and $[1]\neq [k]\in \Ad_{\tau} \zeta(1,\PGL_2\rr,[1])$:
\begin{equation}
[d(s)^{-1} \, k^{-1}\, d(s)\, k] \neq 1\,,
\end{equation}
while this group commutator equates to one for all $[k]\in \Ad_{\tau}\Pp(1_{2\times 2}\times \GL_3)=\Ad_{\tau} \zeta(\rr_{\neq 0},[1],\PGL_3\rr)$ and all $s$.
\end{lem}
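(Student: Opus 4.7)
The plan is a direct matrix-level calculation based on the explicit form of $[d(s)]$ and the parametrization $\zeta$ of Appendix~A. I would fix the representative $d(s) = \diag(1, e^s, e^s, e^s, e^s) \in \GL_5\rr$, so that $(d(s) M d(s)^{-1})_{ij} = e^{s(\alpha_i - \alpha_j)} M_{ij}$ with weights $\alpha_0 = 0$ and $\alpha_i = 1$ for $i = 1, \dots, 4$. The group commutator $[d(s)^{-1} k^{-1} d(s) k]$ equals the projective unit in $\PGL_5\rr$ iff there is a scalar $\lambda \in \rr_{\neq 0}$ with $d(s) k d(s)^{-1} = \lambda k$, i.e.\ all non-zero entries of $k$ must sit at positions $(i,j)$ sharing a common value of $\alpha_i - \alpha_j$.

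For the second assertion, I would take $[k] \in \Ad_\tau \zeta(\rr_{\neq 0}, [1], \PGL_3\rr)$. Using the identification of $\tau$ implicit in the main text (which places the $\GL_2\rr$ block of $\zeta$ at rows/columns $\{0, 4\}$ and the $\GL_3\rr$ block at $\{1, 2, 3\}$, consistent with $\Ad_{\tau^{-1}}[c(\kappa)] = \Pp\diag(\kappa, \kappa, 1, 1, 1)$), a representative $k$ is block-diagonal with a scalar multiple of $1_{2 \times 2}$ in rows/columns $\{0, 4\}$ and a unit-determinant $3 \times 3$ block in rows/columns $\{1, 2, 3\}$. All non-zero entries then satisfy $\alpha_i - \alpha_j = 0$, so $d(s)$ and $k$ commute at the matrix level and the commutator equals the projective unit for every $s$.

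For the first assertion, I would take $[k] \in \Ad_\tau \zeta(1, \PGL_2\rr, [1])$. A representative now carries the four entries of $g \in \SL_2\rr$ at the cross positions $\{0, 4\} \times \{0, 4\}$ together with unit diagonal entries at $\{1, 2, 3\}$. The latter are preserved by $\Ad_{d(s)}$ and pin $\lambda = 1$; projective commutativity then additionally imposes $e^{-s} g_{01} = g_{01}$ and $e^{s} g_{10} = g_{10}$, which for $s \neq 0$ forces $g_{01} = g_{10} = 0$. Hence for any $[k] \neq [1]$ whose representative $g$ has a non-vanishing off-diagonal entry, the commutator is non-trivial in $\PGL_5\rr$, delivering the stated conclusion.

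The main obstacle is careful bookkeeping across two levels of projectivity — within $\PGL_2\rr$, $\PGL_3\rr$ and within the ambient $\PGL_5\rr$ — together with tracking how $\tau$ places the $\SL_2\rr$ block so that it straddles two different weight sectors of $d(s)$. It is precisely this straddling that makes the off-diagonal part of the $\PGL_2\rr$ gauge subgroup fail to commute projectively with $[d(s)]$, whereas $\rr_{\neq 0} \times \PGL_3\rr$ sits entirely in weight-zero positions of $d(s)$ and therefore commutes.
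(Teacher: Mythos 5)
Your reduction to the centralizer condition $d(s)\,k\,d(s)^{-1}=\lambda k$ and the weight bookkeeping are correct: the $1_{3\times 3}$ block pins $\lambda=1$, and for $s\neq 0$ projective commutativity then forces exactly the two off-diagonal entries of $g$ to vanish, with \emph{no} constraint on its diagonal. The gap is in your final inference. What you have proven is that the commutator is non-trivial for every $[k]\neq[1]$ whose $\SL_2\rr$ representative has a non-zero off-diagonal entry; the lemma claims it for \emph{all} $[k]\neq[1]$. The omitted case cannot be repaired: for $g=\diag(a,a^{-1})$ with $a\neq\pm 1$ the representative $k=\Pp\,\diag(a,1,1,1,a^{-1})$ is diagonal, commutes with the diagonal $d(s)$ on the nose, so the group commutator is literally $1_{5\times 5}$, while $[k]\neq[1]$ in $\PGL_5\rr$. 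Your own weight analysis thus shows that the centralizer of $[d(s)]$ in $\Ad_\tau\zeta(1,\PGL_2\rr,[1])$ is the full diagonal torus rather than $\{[1]\}$, i.e.\ the first assertion of the lemma is false as stated.

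The discrepancy with the paper is an arithmetic slip there, not in your calculation: a direct multiplication gives the corner block entries $ad-e^{s}bc$, $(1-e^{s})bd$, $(1-e^{-s})ac$, $ad-e^{-s}bc$ alongside the $1_{3\times 3}$ middle block, whereas the displayed matrix in the paper's proof carries an extra relative factor of $e^{s}$ in the corner block; with the corrected entries the system $bd=ac=bc=0$, $ad=1$ is solvable, consistent with your analysis. The statement that survives --- and the only one needed for the physical conclusion that the $\PGL_2\rr$ gauge bosons do not all commute with global scale transformations --- is that the commutator is non-trivial precisely when $g$ is non-diagonal; if you want a clean lemma, restrict the first assertion to $[k]$ outside the image of the diagonal subgroup of $\PGL_2\rr$.
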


\begin{proof}
The statement for $[k]\in \Ad_{\tau}\Pp(1_{2\times 2}\times \GL_3\rr)$ is clear.
Let $[1]\neq [k]\in \Ad_{\tau}\zeta(1,\PGL_2\rr,[1])$.
As for \Cref{PropReducedGaugeGroupIso}, there is a $g\in\SL_2\rr$, such that
\begin{equation}
[k] = \Ad_{\tau} \Pp\left(\begin{matrix}
g & \\
& 1_{3\times 3}
\end{matrix}\right)\,.
\end{equation}
We write $g$ as
\begin{equation}
g=\left(\begin{matrix}
a & b\\
c & d
\end{matrix}\right)
\end{equation}
for $a,b,c,d\in \rr$, which fulfil the unit-determinant constraint $ad-bc=1$.
Thus,
\begin{equation}
[k] = \Pp\left(\begin{matrix}
a & & b\\
& 1_{3\times 3} &\\
c & & d
\end{matrix}\right)\,.
\end{equation}
Explicit computation yields
\begin{align}\label{EqGroupCommutatorExplicit}
&[d(s)^{-1} \, k^{-1}\, d(s)\, k] \nonumber\\
&\qquad = \Pp\left( \begin{matrix}
e^s ad-e^{2s} bc && (e^s -e^{2s} )bd \\
& 1_{3\times 3} &\\
(e^s -1)ac && e^s ad-bc
\end{matrix}\right)\,.
\end{align}
Assuming $[d(s)^{-1} \, k^{-1}\, d(s)\, k] = [1]$ together with the unit-determinant constraint for $g$ provides a set of five independent algebraic equations.
Certainly, $s=0$ is a solution to these.
Assume $s\neq 0$.
Then the non-trivial off-diagonal elements of the projective matrix \eqref{EqGroupCommutatorExplicit} give $bd=ac=0$.
Inserting $ad=1+bc$ (unit-determinant constraint) into $e^s ad-bc=1$ leads to $e^s +(e^s -1)bc=1$, i.e., $bc=-1$.
Therefore, $ad=0$ and with $e^s ad-e^{2s} bc=1$: $e^{2s} bc=-1$, which is a contradiction.
Only $s=0$ solves $[d(s)^{-1} \, k^{-1}\, d(s)\, k] = [1]$ for $[1]\neq [k]\in\Ad_{\tau}\zeta(1,\PGL_2\rr,[1])$.
\end{proof}

\bibliography{literature}

\end{document}